\tikzset{mynode/.style={draw,text width=1.5in,align=center}}
\tikzset{mynodenoborder/.style={text width=1in,align=center}}
\tikzset{mynode/.style={draw,text width=1.5in,align=center}}
\tikzset{mynodenoborder/.style={text width=1in,align=center}}
\tikzset{
  box/.style = {
    rectangle, rounded corners, minimum width=5cm, minimum height=2cm,
    text centered, draw=black, fill=white},
  arrow/.style={very thick, -Stealth},
  header/.style={
    label={[rectangle, fill=white, draw, anchor=center,
            minimum width=2cm, node font=\ttfamily,
            name=\tikzlastnode-header]north:{#1}}}}
\definecolor{c1}{HTML}{F8766D}
\definecolor{c2}{HTML}{00BA38}
\definecolor{c3}{HTML}{619CFF}
\newcommand{\tblue}[1]{\textcolor{blue}{#1}}
\newcommand{\argmin}{\operatornamewithlimits{argmin}}
\newcommand{\argmax}{\operatornamewithlimits{argmax}}
\newcommand{\bB}{\bold{B}}
\newcommand{\bE}{\bold{E}}
\newcommand{\be}{\bold{e}}
\newcommand{\bY}{\bold{Y}}
\newcommand{\bX}{\bold{X}}
\newcommand{\bU}{\bold{U}}
\newcommand{\bGamma}{\bold{\Gamma}}
\newcommand{\bF}{\bold{F}}
\newcommand{\bS}{\bold{S}}
\newcommand{\bL}{\bold{L}}
\newcommand{\bR}{\bold{R}}
\newcommand{\bI}{\bold{I}}
\newcommand{\Cov}{\bf{\text{Cov}}}
\newcommand{\Cor}{\bf{\text{Cor}}}
\newcommand{\bPi}{\boldsymbol{\Pi}}
\newcommand{\bSig}{\boldsymbol{\Sigma}}
\newcommand{\bfGamma}{\boldsymbol{\Gamma}}
\newcommand{\tr}{\operatornamewithlimits{Trace}}
\newcommand{\bfSigma}{\boldsymbol{\Sigma}}
\newcommand{\indep}{\operatorname{\perp\!\!\!\perp}}
\definecolor{myred}{HTML}{FF1F5B}
\definecolor{myblue}{HTML}{009ADE}
\definecolor{mygreen}{HTML}{00CD6C}
\definecolor{myyellow}{HTML}{FFC61E}
\definecolor{mypurple}{HTML}{AF58BA}
\newcommand{\multiline}[1]{%
    \begin{tabularx}{\dimexpr\linewidth-\ALG@thistlm}[t]{@{}X@{}}
        #1
    \end{tabularx}
}
\newcommand{\Step}[1]{\algrenewcommand{\alglinenumber}[1]{\textbf{Step ##1:} } #1}
\def\NoNumber#1{{\def\alglinenumber##1{\ \ \ \ \ \ \ \ \ \ }\State #1}\addtocounter{ALG@line}{-1}}
 \algnewcommand\Input{\item[\ \ \ \ \textbf{Input:}]}
 \algnewcommand\Output{\item[\ \ \ \ \textbf{Output:}]}
\algnewcommand\algorithmicforeach{\textbf{for each}}
\newcommand{\Suppressnumber}{\def\alglinenumber##1{\ \ \ \ \ \ \ \ \ \ }}
\newcommand{\Reactivatenumber}[1]{\setcounter{ALG@line}{\numexpr#1-1}}
\newcommand{\NewStep}[1]{\algrenewcommand{\alglinenumber}[1]{\textbf{Step ##1:} } #1}
\theoremstyle{thmstyleone}%
\newtheorem{proposition}{Proposition}
\newtheorem{lemma}{Lemma}%
\begin{document}

\journaltitle{arXiv Perprint}
\copyrightyear{2025}
\pubyear{2019}
\appnotes{Article}

\firstpage{1}


\title[CoReg]{Modeling Dependence in Omics Association Analysis via Structured Co-Expression Networks to Improve Power and Replicability} 
\author[1,2,3]{Hwiyoung Lee\ORCID{0000-0002-3855-2316}}
\author[1,2,3]{Yezhi Pan\ORCID{0009-0001-4014-0302}}
\author[1,2,3]{Shuo Chen\ORCID{0000-0002-7990-4947}}

\address[1]{\orgdiv{Epidemiology \& Public Health},\orgname{University of Maryland, School of Medicine}, \orgaddress{\street{Baltimore}, \postcode{MD}, \country{USA}}}
\address[2]{\orgdiv{Maryland Psychiatric Research Center (MPRC)},\orgname{University of Maryland, School of Medicine}, \orgaddress{\street{Catonsville}, \postcode{MD}, \country{USA}}}
\address[3]{\orgdiv{Institute for Health Computing},\orgname{University of Maryland}, \orgaddress{\street{North Bethesda}, \postcode{MD}, \country{USA}}}


\received{Date}{0}{Year}
\revised{Date}{0}{Year}
\accepted{Date}{0}{Year}

\abstract{Accounting for dependence among high-dimensional variables in omics data analysis is critical to obtain accurate and reliable statistical inference. Although latent, omics variables often exhibit structured correlation/co-expression patterns. However, there are few methods explicitly accounting for such structured dependence in the statistical analysis of omics data (e.g., differential expression analysis). To address this methodological gap, we propose a \textbf{Co}expression network multivariate \textbf{Reg}ression (CoReg), which integrates co-expression network structure into multivariate regression analysis to precisely account for the inter-correlations (dependence) among omics variables. We show in simulations that CoReg substantially improves the accuracy of statistical inference and replicability across studies. These findings suggest that CoReg provides an alternative approach for omics data association analysis with dependence adjustment, analogous to the role of mixed-effects models in handling repeated measures in lower-dimensional settings.}
\keywords{Co-expression network, Covariance model, Differential expression, Factor analysis, Sensitivity, Replicability}
\maketitle

\doublespacing

Multivariate association analysis has long served as a fundamental approach for linking large-scale molecular signals (e.g., epigenomics, transcriptomics, proteomics, and metabolomics) to biological traits of interest. For instance, as a common first step in diagnostic biomarker discovery, differential expression (DE) analysis is used to identify omics features that are differentially expressed between cancer cases and controls. Such multivariate analyses are typically followed by multiple-testing correction procedures to control the rate of false-positive findings.

From a statistical perspective, properly accounting for dependence is central to multivariate association analysis, as it improves statistical efficiency, reduces the Type I error rate, and enhances reproducibility and replicability  \citep{Yu:2022(Neuron),Mukamel:2025(Nature)}. For example, mixed-effects models have been widely applied to address dependence among multivariate outcomes in longitudinal clinical trials. Nevertheless, accurately modeling dependence structures in high-dimensional omics outcomes remains a long-standing challenge. 

High-dimensional omics variables derived from complex and highly organized biological systems often exhibit strong correlations that follow organized network topological structures \citep{Langfelder2008}. Co-expression network analysis methods have been successfully applied to uncover these dependencies from covariance and correlation matrices \citep{Qiong:2021ICN}, with many findings validated by known biological pathways. Nonetheless, integrating such structured network dependencies into multivariate statistical inference remains a major methodological gap due to the ultra-high dimensionality of parameters in the covariance matrix and complex network topology. 

Due to aforementioned challenges, mass-univariate approaches are commonly used for omics data analysis, which involve performing separate regressions (or two-sample tests) for each omics variable followed by multiple testing correction. This procedure can be problematic, as they ignore dependencies among variables, potentially leading to reduced statistical efficiency on individual omics variables and violated assumptions for subsequent multiple testing correction (e.g., independence of tests for \citep{BHFDR:1995}). Consequently, missing the omics variables truly associated with the health condition and false positively identifying non-related omics variables jointly contribute to reduced replicability of findings across studies \citep{Perng:2020replication}. These replicability challenges are a widespread problem in scientific research \citep{Mukamel:2025(Nature), Singh:2023, Baker:2016,Ioannidis:2005}. 

To address this issue, various advanced statistical methods have been developed. For example, the low dimensional factor guided regression models have been well-established for modeling the unstructured dependence patterns among high-dimensional outcomes \citep{Leek(PNAS):2008,Fan:2024}, and for adjusting confounding and unmodeled effects \citep{Wang:2017, Bing:2024}, while other methods combined the co-expression and association analysis for module-level inference \citep{Tesson:2010, Dam:2017}. However, these models remain limited to incorporating the latent yet organized network structures of the large covariance matrix  (see \Cref{fig:omics}) into modeling the dependence/covariance in the multivariate regression model of omics variables. Therefore, these methods may not fully account for the organized dependence patterns of omics data. This limitation reduces statistical inference accuracy, leading to decreased sensitivity and increased false discovery rates (FDR), and thus less replicable findings.

To fill this methodological gap, we propose a \textbf{Co}-expression network multivariate \textbf{Reg}ression (CoReg) framework. Building on the principles of low-dimensional modeling, CoReg explicitly models the covariance matrix of multivariate omics variables by leveraging the extracted network structures from co-expression analysis and incorporating them into a network-guided factor analysis. 
By modeling structured dependence structures in multivariate regression models, the CoReg can simultaneously improve the accuracy of statistical inference on individual omics variables and subsequent multiple testing correction, and consequently the replicability.

\begin{figure*}[h]
    \centering 
    \includegraphics[width=0.9\linewidth]{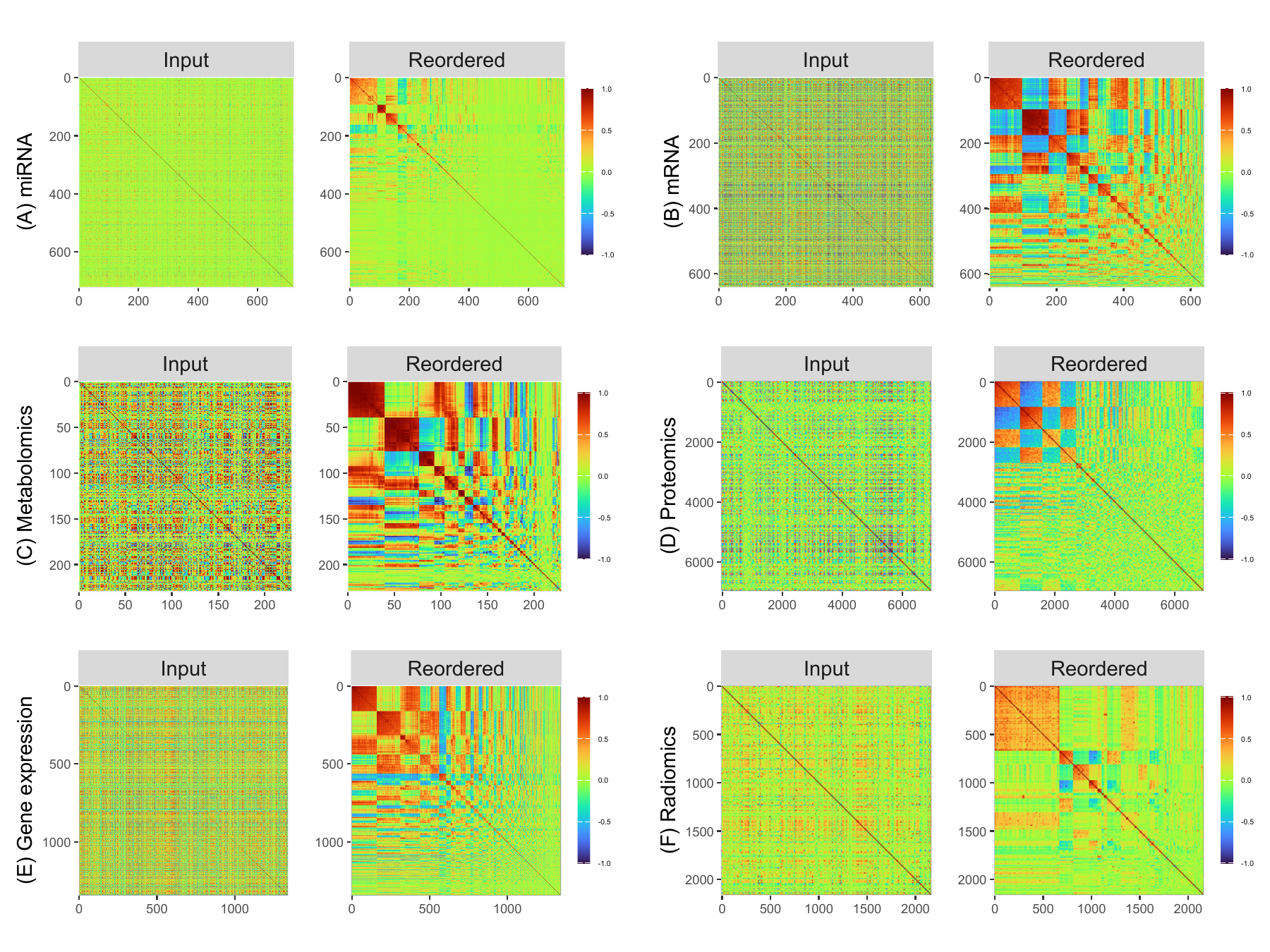}  
    \caption{Latent interconnected community/block structure in various omics data (left subfigures: input covariance/correlation matrices, right subfigures: reordered). (A) The Cancer Genome Atlas (TCGA) Pan-kidney cohort micro RNA (miRNA) data. (B) Leukocyte Messenger RNA (mRNA) data \citep{overmyer:2021large}. (C) Alzheimer's Disease Neuroimaging Initiative (ADNI) metabolomics data. (D) TCGA breast cancer proteomics data. (E) TCGA Pan-Kidney cohort Gene expression data. (F) UK Biobank Imaging-derived phenotypes (IDP) data.}
    \vspace{-5mm}
    \label{fig:omics}
\end{figure*}

\section*{Methods}
\subsection*{General Linear Multivariate Regression}

We first consider a general multivariate regression, formulated as 
\begin{align}
    \bY = \bB\bX + \bE,
    \label{eq:GLM}
\end{align}
where  $\bY=(\bY_1,\cdots,\bY_n) \in \mathbb{R}^{p\times n}$ denote the multivariate outcomes for $p$ omics features across $n$ participants, $\bX=(\bX_1,\cdots,\bX_n) \in \mathbb{R}^{q\times n}$ contains $q$ predictors (e.g., disease status) and  $\bB = (\bB_1,\cdots,\bB_p)^\top \in\mathbb{R}^{p\times q}$ is the matrix of regression coefficients, and  $\bE=(\bE_1,\cdots,\bE_n)\in\mathbb{R}^{p\times n}$ is the residual matrix. Under the general multivariate regression framework, $\bY_i$ follows a multivariate normal distribution, $\bY_i \sim \text{Normal} (\bB \bX_i, {\Cov}(\bE_i)={\bfSigma}_{p \times p})$. Model \eqref{eq:GLM} represents  traditional mass-univariate analysis methods (e.g., DE analysis), which model omics variables separately while implicitly assuming a diagonal error covariance matrix $\bfSigma$.

However, due to the (latent) structured dependence patterns commonly observed in omics data, $\Cov(\bE_i)$ deviates from a diagonal matrix (see \Cref{fig:omics}), thereby violating the assumption and leading to sub-optimal statistical inference.

\subsection*{CoReg Model Specification}\label{Sec:Method}
We propose CoReg, a model that explicitly accounts for dependence among omics outcomes in multivariate association analysis. As the dependence pattern is represented by the covariance of residuals $\Cov(\bE_i)$, CoReg focus on residual modeling to account for the network dependence patterns like other low-dimensional latent factor models (e.g., \cite{Leek(PNAS):2008}). Specifically, by modeling the residual term $\bE$ in \eqref{eq:GLM} by $\bE=\bfGamma\bF + \boldsymbol{\mathcal{E}}$,  we specify the CoReg model as follows:
\begin{align}
    \bY = \bB\bX  + \bfGamma\bF + \boldsymbol{\mathcal{E}},
    \label{eq:CoReg_model}
\end{align}
where $\bF = (\bF_1,\cdots, \bF_n) \in\mathbb{R}^{K \times n}$ denotes low-dimensional latent factors ($K < p$) that capture both dependence patterns and underlying co-expression network structure, $\bfGamma \in \mathbb{R}^{p\times K}$  is the regression coefficient matrix associated with $\bF$, and $\boldsymbol{\mathcal{E}} = (\boldsymbol{\varepsilon}_1,\cdots,\boldsymbol{\varepsilon}_n) \in \mathbb{R}^{p\times n}$ is the residual after the factor-adjustment.

Under the CoReg model in \eqref{eq:CoReg_model}, the covariance of the residual $\Cov(\bE_i)=\Cov( \bfGamma\bF_i + {\boldsymbol{\mathcal{E}}}_i)$ admits the following decomposition:
\begin{align}
    \Cov( \bfGamma\bF_i + {\boldsymbol{\mathcal{E}}}_i)=\bfGamma \Cov(\bF_i) \bfGamma^\top + \bfSigma_{\boldsymbol{\mathcal{E}}}.
    \label{eq:CoReg_model_cov}
\end{align}
This separates the covariance $\Cov(\bE_i)$ into two components: (i) a factor-driven component $\bfGamma \Cov(\bF_i) \bfGamma^\top$, which captures structured cross-omics dependence explained by the latent factors,  and (ii) a noise term $\bfSigma_{\boldsymbol{\mathcal{E}}}$ unexplained by the factors. As illustrated in \Cref{fig:Workflow_CoReg}, once $\bF$ is properly identified, the residual covariance is well approximated by the first term $\bfGamma \Cov(\bF_i) \bfGamma^\top$ (see $\Cov(\bE_i)\approx\Cov(\bL\bF_i)$ in step 2). Consequently, the factor-driven component effectively models the off-diagonal covariance, leaving a noise term $\bfSigma_{\boldsymbol{\mathcal{E}}}$ that is nearly diagonal. This indicates that the residuals are approximately independent after factor adjustment (see step 3 in \Cref{fig:Workflow_CoReg}), enabling CoReg to account for covariance and yield valid statistical inference.

CoReg consists of three main steps: \textbf{Step 1.} fit a standard multivariate regression without accounting for dependence; \textbf{Step 2.} estimate the latent factors $\bF$ that encode the outcome dependence structure, and \textbf{Step 3.} perform a factor-augmented multivariate regression analysis. A detailed workflow is provided in \Cref{fig:Workflow_CoReg} and summarized in Algorithm 1 in \tblue{SM}. Since step 2 is pivotal in CoReg, in the following, we present the detailed procedure of identifying the latent factors $\bF$ by integrating (omics) co-expression network analysis and factor model.

\subsection*{Co-expression Network Analysis}

We implement a novel co-expression network extraction procedure that is tailored to extract interconnected communities from  the correlation/co-expression matrices of omics data.  The extracted interconnected communities  explicitly reveal the block structure (see step 2 (A,B) in \Cref{fig:Workflow_CoReg}) and can be reorganized into a well-studied block correlation matrix \citep{Archakov:2024}. The extracted communities will be used to identify $\bF$ in the subsequent  analysis (see step 2(C) in \Cref{fig:Workflow_CoReg}).

The input to co-expression network analysis is the residual covariance (or correlation) matrix obtained after fitting the multivariate regression in \eqref{eq:GLM} (step 1), i.e., $\bfSigma= \Cov(\bE_i)$. When variables are standardized to unit variance, $\bfSigma$ reduces to the correlation matrix $\bR$.  We represent $\bfSigma$ as a weighted, undirected graph $G=(V,W)$, where $V=\{v_1,\ldots,v_p\}$ is the node set (node $v_j$ corresponds to omics variable $j$) and $W=\{\omega_{jj^\prime}: 1\le j<j^\prime\le p\}$ is the set of edge weights. Each weight $\omega_{jj^\prime}$ quantifies the pairwise association between omics variables $j$ and $j^\prime$ (e.g., $\omega_{jj^\prime} = \vert \texttt{cor}(j,j^\prime)\vert$ with $\bR$ as input). As the interconnected community structure is highly prevalent in various types of omics data (\Cref{fig:omics}), we consider a general graph structure $G = \bigcup_{k=1}^KG_k \cup G_0$ which decomposes collection of $K$ dense subgraphs/modules $\{G_k=(V_k,W_k)\}_{k=1,\cdots,K}$ and $G_0 = (V_0, W_0)$ comprising singletons (i.e., isolated omics variables). Subgraphs/modules can be positively and negatively correlated. 

To extract the co-expression network structure, we optimize the following objective function: 
\begin{align}
     \bigcup_{k=1}^K \widehat{G}_k = \argmax_{\{\bigcup G_k, K\}} \sum_{k=1}^K\frac{\vert W(G_k) \vert}{\vert V_k\vert^\lambda}, \label{eq:Dense}
\end{align}
where the numerator aggregates the within-module edge weights (i.e., $\vert W(G_k) \vert=\sum_{j<j^\prime \in G_k} \omega_{jj^\prime}$), and the denominator penalizes the size of the module. Thus, \eqref{eq:Dense} seeks to recover a collection of densely connected modules of minimal sizes (i.e., number of edges). The tuning parameter $\lambda \in (1,2]$ regulates the trade-off between maximizing intra-module weights and constraining module sizes. Increasing $\lambda$ places greater penalties on module sizes, leading to smaller, denser modules. The objective function \eqref{eq:Dense} can automatically selects $K$, while $\lambda$ is objectively determined (see \tblue{Algorithm 1} in \tblue{SM}).

The outputs of CoReg's co-expression network analysis includes the estimated modules $\bigcup_{k=1}^K \widehat{G}_k$ and a reordered covariance matrix $\Cov(\bE_i)$ based on these modules, which reveals an explicit interconnected community structure (see \Cref{fig:omics}). CoReg differs from commonly used co-expression network analysis toolkits by distinguishing strong correlations within diagonal blocks from medium correlations of inter-block connections. Therefore, CoReg provides a ``higher-resolution'' solution for co-expression network extraction compared to classical methods (see \tblue{Fig. S1} in \tblue{SM}), yielding a more accurate characterization of latent omics dependence patterns.

\subsection*{Co-expression Network-based Factor Model for $\Cov(\bE_i)$}

We represent the residuals using a factor model:
\begin{equation}
\begin{aligned}
    & \bE_i =\bL \bF_i + \boldsymbol{\bU_i} \\
    & \Cov(\bE_i)=\bfSigma = \bL \bfSigma_{\bF}\bL^\top + \bfSigma_{\bU}, 
\end{aligned}
\label{eq:CFA}
\end{equation}
where $\bL \in \mathbb{R}^{p\times K}$ denotes the loading matrix, $\bF_i$ the latent factors, $\bU_i$ the errors not explained by the factors, and $\bfSigma = \Cov(\bE_i)$,  $\bfSigma_{\bF} = \Cov(\bF_i)$, and $\bfSigma_{\bU}=\Cov(\bU_i)$.

Unlike commonly used exploratory factor models, which allow each omics variable to load onto multiple factors,  we implement  a non-overlapping structure on the loading matrix $\bL$ \ as in confirmatory factor analysis (CFA) models: 
\begin{align}
    \bL = \begin{pmatrix}
        {\boldsymbol{\ell}}_1 & \bf{0} & \cdots & \bf{0}\\
        \bf{0} & {\boldsymbol{\ell}}_2 & \cdots & \bf{0}\\
        \vdots & \vdots & \ddots & \vdots\\
        \bf{0} & \bf{0}  &\cdots & {\boldsymbol{\ell}}_K\\
        \bf{0} & \bf{0} &\cdots & \bf{0} 
    \end{pmatrix}, \label{eq:Loading}
\end{align}
where ${\boldsymbol{\ell}}_k \in \mathbb{R}^{p_k}$ is a loading  sub-vector that maps a set of omics variables exclusively to the $k$th factor. We specify non-zero loadings based on $\bigcup_{k=1}^K \widehat{G}_k$, with the number of factors equals to the number of  communities $K$. The nonzero elements of the $k$th column (i.e., ${\boldsymbol{\ell}}_k=(\ell_{k,1},\cdots,\ell_{k,p_k})^\top$) corresponds to the omics variables in module $G_k$ with $p_k$ nodes, while the other entries of the $k$th column are zeros. This formulation ensures that the corresponding low-rank representation $\bF$ of high-dimensional variables not only reduces dimensionality but also preserves biological interpretability, with each factor linked to an expression network module.

The factor model-based covariance matrix specification in \eqref{eq:CFA} coincides with the covariance reordered by co-expression network analysis (see step 2 (C) in \Cref{fig:Workflow_CoReg}).  Specifically, we have
\begin{equation}
\begin{aligned}
   \bfSigma &= \begin{pmatrix}
        \bfSigma_{11}& \bfSigma_{12} & \cdots & \bfSigma_{1K} & \bf{0}\\
        \bfSigma_{21} & \bfSigma_{22} & \cdots & \bfSigma_{2K} & \bf{0}\\
        \vdots & \vdots & \ddots &\vdots & \vdots\\
        \bfSigma_{K1} & \bfSigma_{K2} & \cdots & \bfSigma_{KK} & \bf{0}\\
        \bf{0} & \bf{0} & \cdots & \bf{0} & \bf{0}
    \end{pmatrix} +\bfSigma_{\bU}  \\
   &=\bL \bfSigma_{\bF} \bL^\top +\bfSigma_{\bU}\\
   &= \begin{pmatrix}
        {\boldsymbol{\ell}}_1 \bfSigma_{\bF_{(11)}} {\boldsymbol{\ell}}_1^\top &  \cdots & {\boldsymbol{\ell}}_1 \bfSigma_{\bF_{(1K)}} {\boldsymbol{\ell}}_K^\top & \bf{0}\\
        {\boldsymbol{\ell}}_2 \bfSigma_{\bF_{(21)}} {\boldsymbol{\ell}}_1^\top &  \cdots & {\boldsymbol{\ell}}_2 \bfSigma_{\bF_{(2K)}} {\boldsymbol{\ell}}_K^\top & \bf{0}\\
        \vdots &  \ddots & \vdots &\vdots\\
        {\boldsymbol{\ell}}_K \bfSigma_{\bF_{(K1)}} {\boldsymbol{\ell}}_1^\top &  \cdots & {\boldsymbol{\ell}}_K \bfSigma_{\bF_{(KK)}} {\boldsymbol{\ell}}_K^\top & \bf{0}\\
         \bf{0} & \cdots & \bf{0} & \bf{0}
    \end{pmatrix}  +\bfSigma_{\bU}, 
\end{aligned}
    \label{eq:UB_Matrix}
\end{equation}
where the first line represents the block covariance matrix of omics data, and each submatrix $\bfSigma_{kk^\prime}$ represents the intra-community correlation strengths when $k = k^\prime$ (diagonal blocks) and inter-community correlation strengths when $k \neq k^\prime$ (off-diagonal blocks); the second line is the CFA covariance model in \eqref{eq:CFA}; and the third line expresses each  omics covariance matrix block in terms of  corresponding loading vectors and factor covariance matrix. The above links the block/community memberships in co-expression network analysis with loadings in CFA based on the shared structured covariance matrix (see formal derivation in \tblue{proposition 1} in \tblue{SM}). When the dependence pattern among omics variables is well captured by the interconnected community structure, the covariance matrix $\bfSigma$ can be largely explained by the  first term (i.e., loadings $\bL$ (i.e., module memberships)  and covariance of factors  $\bfSigma_{\bF}$) leaving $\bfSigma_{\bU}$ approximately equal to the diagonal matrix (i.e., independence). Therefore, CoReg accounts for the covariance among omics variables by introducing a low-dimensional vector $\bF$ into the multivariate regression model.

We follow the computational procedure of \citet{yang:2024semi} to estimate $ {\boldsymbol{\ell}}$ and $\bF$ using closed-form estimators. CoReg's factor estimation offers two key benefits: (i) by allowing correlated factors, it captures the interconnected modular structure of omics data and better characterizes dependence than conventional latent-factor models, which commonly enforce factor orthogonality \citep{Leek(PNAS):2008}; and (ii) it achieves faster estimation via co-expression–guided loadings, which avoid the computationally expensive iterative matrix decompositions required by standard approaches \citep{Dhaene:2024}.

\begin{figure*}[t!]
\centering
\resizebox{0.8\textwidth}{!}{ 
\begin{tikzpicture}[node distance=1cm, font=\fontsize{8}{11}\selectfont]
    \node (box1) [box, header = Step 1: Multivariate regression ]{\parbox{8cm}{\centering \vspace{0.5cm} 
    \begin{minipage}{1\linewidth}
    \centering
                \begin{itemize}
                    \item[] $\bY_{i} = \bB \bX_i +$ \colorbox{yellow}{$\bE_{i}$}
                \end{itemize}
    \end{minipage}\\\vspace{.3cm}
    \begin{minipage}{1\linewidth}
            \begin{tikzpicture}[node distance=0.01cm]
            \centering
            \node(Blank1) {};        
            \node(S1) [left = -1cm of Blank1]{
            \scalebox{0.4}{\includegraphics{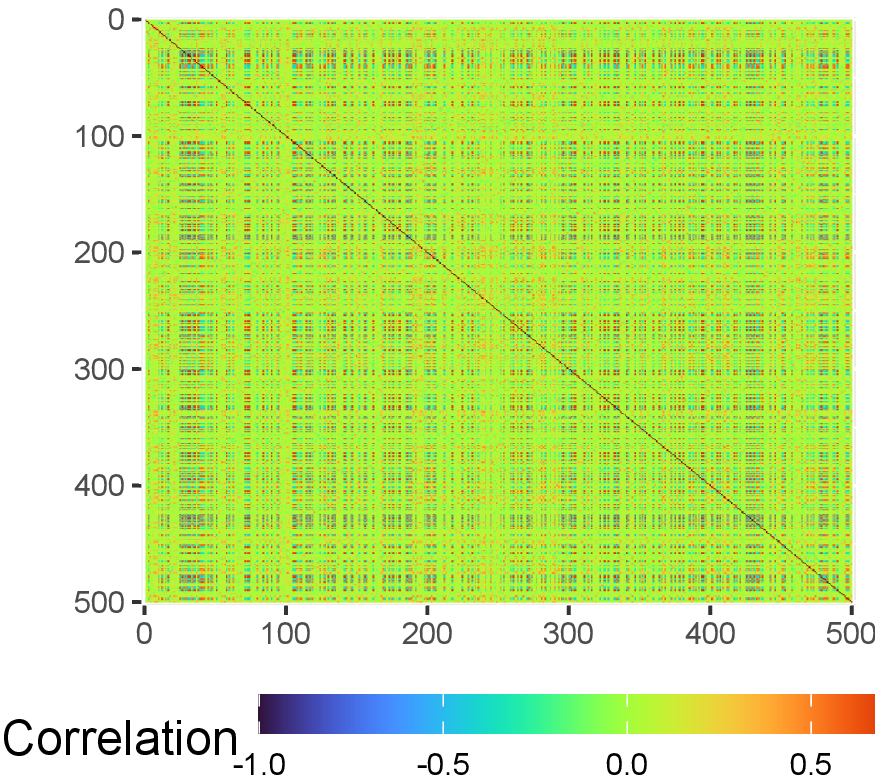}}};
            \end{tikzpicture}
    \end{minipage}}};
\node (box3) [box, header = {\shortstack{Step 3: Multivariate regression \\ with co-expression-modeled dependence}}, right=of box1]{\parbox{8cm}{\centering \vspace{0.5cm} \begin{minipage}{1\linewidth}
            \centering
                \begin{itemize}
                    \item[] $\bY_{i} = \bB \bX_i + \bfGamma \bF_i +$ \colorbox{pink}{$\boldsymbol{\mathcal{E}}_{i}$}
                \end{itemize}
            \end{minipage}\\\vspace{.3cm}
            \begin{minipage}{1\linewidth}
            \begin{tikzpicture}[node distance=0.01cm]
            \centering
            \node(Blank3) {};
            \node(S3) [right = -2.5cm of Blank3]{\scalebox{0.4}{\includegraphics{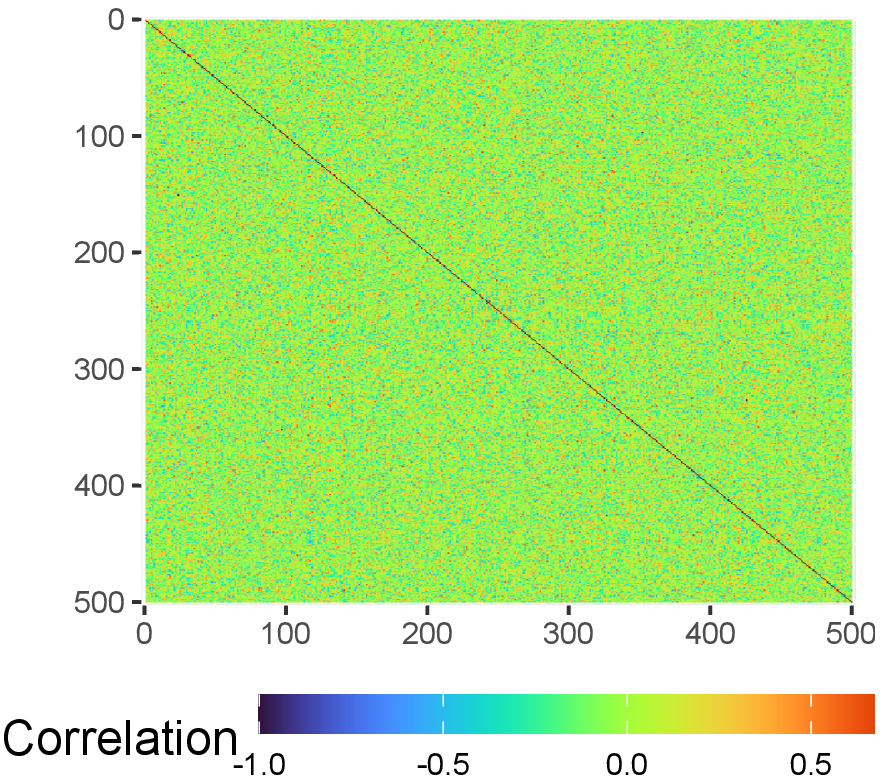}}};
            \end{tikzpicture}
            \end{minipage}}};
\path (box1.west) -- (box3.east) coordinate[midway] (midpoint);

   \node (box2) [box, header = Step 2: Model dependence by Co-expression network-based factor analysis, below= 4cm of midpoint] {\parbox{17cm}{ \centering  
    \begin{minipage}{1\linewidth}
            \begin{tikzpicture}[node distance=0.01cm]
                \node(Blank) {};
                \node(S2_Unordered) [left = -1cm, below=-1.4cm of Blank]{\scalebox{0.4}{\includegraphics{Workflow1_Cor.eps}}};
                \node(S2_Ordered) [right = 0.8cm of S2_Unordered]{\scalebox{0.4}{\includegraphics{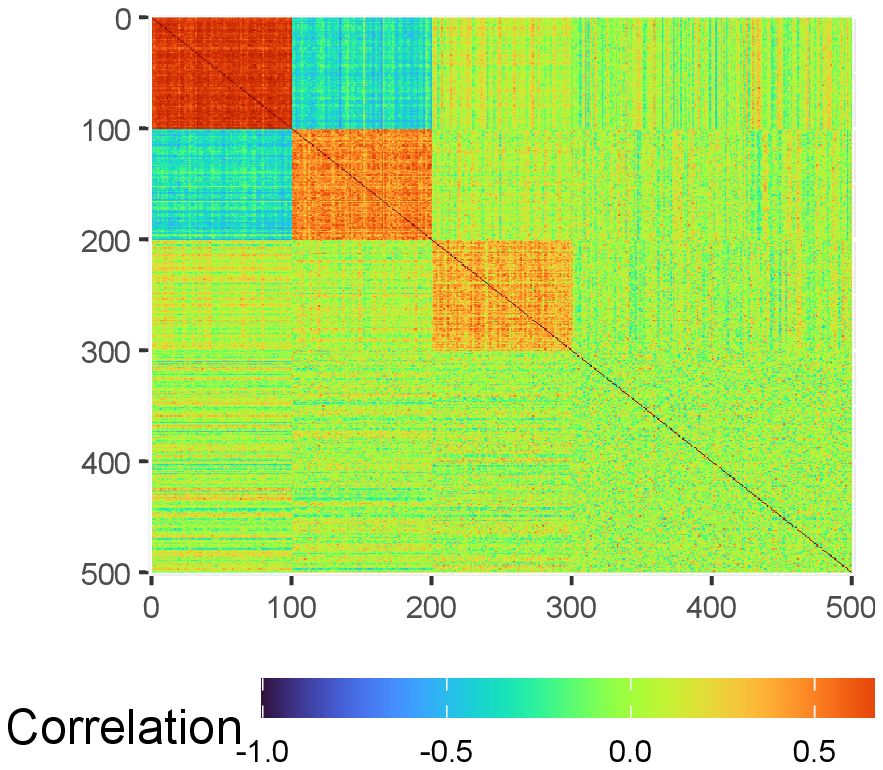}}};
                \node(S2_Clean) [right = 0.8cm of S2_Ordered]{\scalebox{0.4}{\includegraphics{Workflow3_Clean.eps}}};
                
                \node(Sig) [below = 1cm of S2_Unordered]{\scalebox{0.4}{\includegraphics{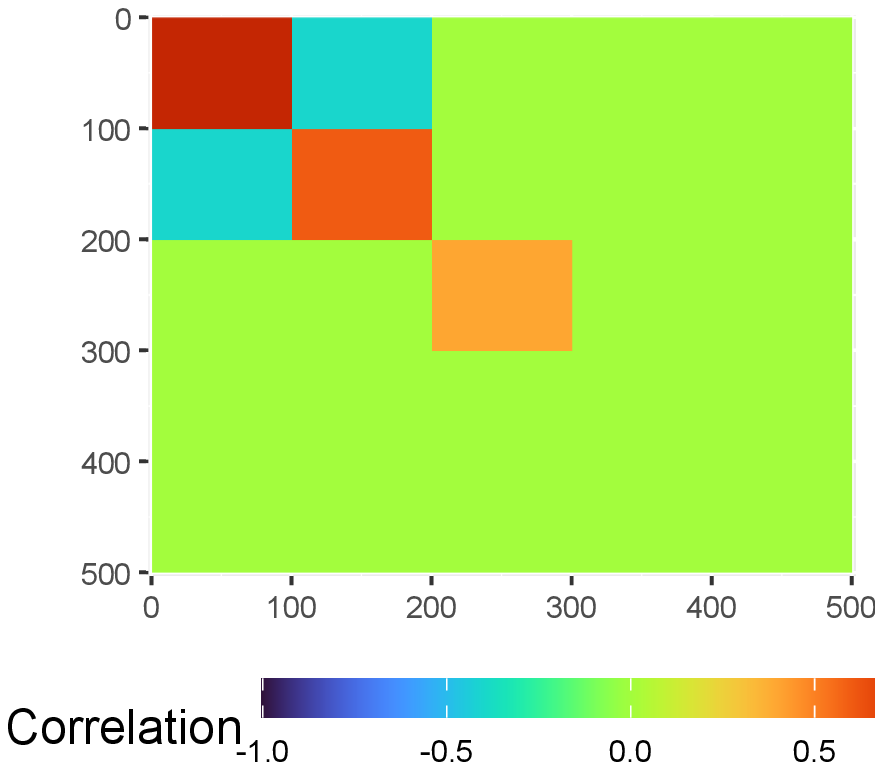}}};
                \node(Eq) [right = -2.5cm of Sig] {\Large{$=$}};
                \node(L) [right = -1cm of Sig]{\scalebox{0.4}{\includegraphics{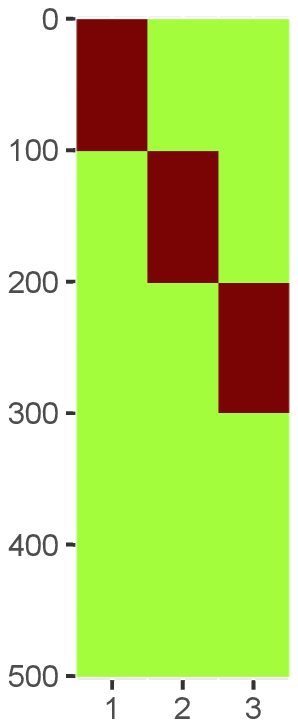}}};
                \node(F) [right = -2cm of L] {\scalebox{0.4}{\includegraphics{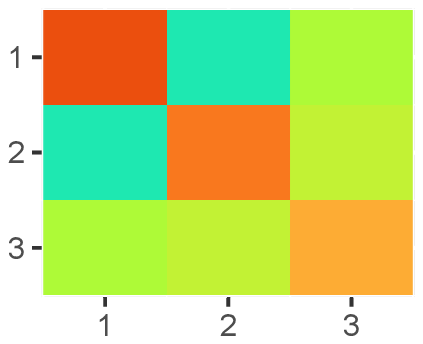}}};
                \node(Lt) [right = -1cm of F] {\scalebox{0.4}{\includegraphics{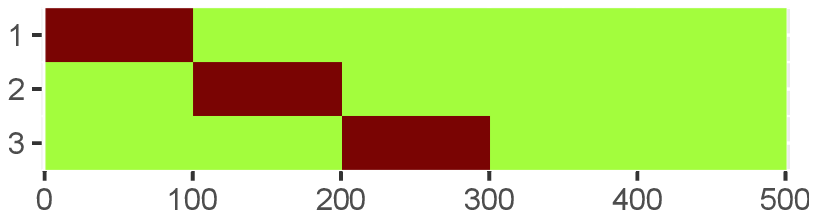}}};
                \node(Prod1) [right = -3.5cm of L] {\Large{$\times$}};
                \node(Prod2) [right = -3.5cm of F] {\Large{$\times$}};

                \node(L_some) [left = -1.8cm of L.east]{};    
                \node(L_text) [below = -1.1cm of L]{$\bL$};
                \node(F_text) [right = -1.8cm of L_text]{$\Cov(\bF_i)$};
                \node(Lt_text) [right = -1cm of F_text]{$\bL^\top$};

                \node(SCFA_above) at ($(S2_Ordered)+(0.5cm,-3cm)$) {};
                
                \draw[arrow, dashed, line width=0.1mm] ($(S2_Ordered.south)+(0.5cm,0)$) -- (SCFA_above) node[midway, right, xshift=- .2cm] {\colorbox{myblue}{\small{\color{white} Confirmatory Factor Analysis}}};

                \draw[arrow, dashed, line width=0.1mm] ($(S2_Unordered.east)+(-0.5cm,0)$) -- ($(S2_Ordered.west)+(0.5cm,0)$)node(Coexp)[midway, above, yshift=-0.5cm, align=center] {\colorbox{mypurple}{\small{\color{white}Coexpression}}};
               \node[align=center](coexp) at ($(Coexp)+(0.1cm,+0.2cm)$) {{\tiny{\color{mypurple}\textbf{\shortstack{Reorder by \\the extracted modules}}}}};
                \draw[arrow, line width=0.5mm,color=myred] ($(S2_Clean.west)-(-0.5cm,0)$) -- ($(S2_Ordered.east)+(-0.5cm,0)$)node[midway, above, yshift=-0.5cm] {\color{myred}{\Large{$\boldsymbol{+}$}}};
                \draw[arrow, line width=0.5mm,color=myred] ($(Sig.east)+(-0.8cm,1.5cm)$) -- ($(S2_Ordered.west)+(0.5cm,-1cm)$) node[midway, above,xshift=-0.4cm, yshift=-0.9cm] {\color{myred}{\Large{$\boldsymbol{+}$}}};
                \node[draw, rectangle, inner xsep=-1.6cm, inner ysep=0.5cm, fit={(L_some) (F) (Lt)  ($(Lt.east)+(1.5cm,0)$) ($(Lt.east)+(-12cm,0)$) ($(L_text.south)+(0,1cm)$)}] (box.scfa) {};     
            \end{tikzpicture}
            \end{minipage}\\
                \begin{minipage}{1\linewidth}
            \begin{itemize}
               \item[] \hspace{0.5 cm} Model dependence by a confirmatory factor model:   $\bE_{i} = \bL \bF_i + \bU_{i}$ $ \Leftrightarrow \Cov(\bE_i)=\bL \Cov(\bF_i)\bL^\top + \Cov(\bU_i)$\vspace{.1cm}
               \item[] \hspace{0.5 cm} Estimate factors $\bF$ from co-expression networks:   $\bF_i = {({\widehat{\bL}}^\top \widehat{\bL})}^{-1} \widehat{\bL}^\top \bE_i$, where loadings $\widehat{\bL}$ represent modules in  \eqref{eq:Loading}
            \end{itemize}
            \end{minipage}}
            };
       
\draw [arrow, line width=0.2mm] (box1.west) -- ++(-0.5cm, 0) |-  (box2.west);
\draw [arrow, line width=0.2mm] (box2.east) -- ++(0.5cm, 0) |-  (box3.east);

\node (S1_FN) at ($(S1.south)+(2.8cm,-0.3cm)$) {};
\node (S1_FS) at ($(S1.south)+(2.8cm,-3.3cm)$) {};
\draw[arrow, dashed, line width=0.1mm] (S1_FN) -- (S1_FS) node[] {};   

\node (S3_FN) at ($(S3.south)+(5.5cm,-0.3cm)$) {};
\node (S3_FS) at ($(S3.south)+(5.5cm,-4.2cm)$) {};

\node (S1_caption) at ($(S1.north)+(6.5cm,-2.3cm)$) {\colorbox{yellow}{$\Cov(\bE_i)$}};
\node (S1_caption_end) at ($(S1_caption.east)+(-3.2cm,0cm)$) {};
\draw[arrow, dashed, line width=0.1mm] (S1_caption) -- (S1_caption_end) node[] {};   

\node (S2_caption1) at ($(S2_Ordered.south)+(-1.5cm,-0.5cm)$) {(B) $\Cov(\bE_i)_\text{Reordered}$};
\node (S2_caption2) at ($(Sig.south)+(-2cm,-0.5cm)$) {(C) $\Cov(\bL\bF_i)$};
\node (S3_caption) at ($(S3.north)+(1.3cm,-2.3cm)$) {\colorbox{pink}{$\Cov(\boldsymbol{\mathcal{E}}_i)$}};
\node (S3_caption_end) at ($(S3_caption.west)+(3.7cm,0cm)$) {};
\draw[arrow, dashed, line width=0.1mm] (S3_caption) -- (S3_caption_end) node[] {};  
\node (S2_captionA) at ($(S2_caption1)+(-6.5cm,0cm)$) {(A) $\Cov(\bE_i)$};
\node (S2_captionD) at ($(S2_caption1)+(5.2cm,0cm)$) {(D) $\Cov(\bU_i)$};
\end{tikzpicture}
}
\caption{Overview of CoReg. \textbf{Step 1:} Perform traditional multivariate regression \eqref{eq:GLM}, and consequently $\Cov(\bE_i)$ characterizes the dependence patterns. \textbf{Step 2:} Model the dependence matrix $\Cov(\bE_i)$ by synchronized co-expression network analysis and confirmatory factor analysis (CFA). (A) visualizes the heatmap of correlation matrix $\Cov(\bE_i)$; (B) demonstrates the co-expression network analysis results by reordering the variables to highlight extracted modules; (C) shows that the reordered covariance matrix $\Cov(\bE_i)$ in (B) can be largely approximated by the CFA covariance model; and (D) displays $\Cov(\bU_i)$ is approximately a diagonal matrix as $\Cov(\bF_i)$ can well represent the dependence patterns. \textbf{Step 3:} Perform multivariate analysis using $\bX$ and $\bF$ as predictors, followed by statistical inference.}
\vspace{-5mm}
\label{fig:Workflow_CoReg}
\end{figure*}

\subsection*{Regression parameter estimation and inference}   
In step 3, we fit a factor-augmented multivariate regression, including the estimated latent factors $\bF$ from step 2 as fixed predictors together with $\bX$.

The joint estimation of regression coefficients $(\bB,\bfGamma)$ can be obtained by least squares method, i.e., $(\widehat{\bB},\widehat{\bfGamma}) = \argmin_{\bB\in\mathbb{R}^{p\times q},\bfGamma\in\mathbb{R}^{p\times K}} \Vert \bY- \bB\bX - \bfGamma\bF \Vert_F^2$. By the orthogonality of $\bX$, and $\bF$ (see \tblue{Lemma 1} in \tblue{SM}), including $\bF$ does not affect the estimation of $\bB$, resulting in $\widehat{\bB}$ that is identical to that estimated from model in \eqref{eq:GLM} (see \tblue{Lemma 2} in \tblue{SM}). Estimation of $\bfGamma$ also does not depend on the presence of $\bX$ (see \tblue{Lemma 2} in \tblue{SM}). Moreover, the joint estimation of $(\bB,\bfGamma)$ is equivalent to sequentially estimating each one - first, regressing $\bY$ on $\bX$ and then regressing the residuals (i.e., $\bE=\bY-\widehat{\bB}\bX$) on $\bF$ (i.e., $\argmin_{\bfGamma}\Vert \bE- \bGamma\bF \Vert_F^2$, see  \tblue{Lemma 3} in \tblue{SM}).

CoReg accounts for the dependence among omics outcomes and ensures that the residuals $\boldsymbol{\mathcal{E}} = \bY -\bB\bX - \bfGamma\bF$ being independent (see  \tblue{Lemma 4} in \tblue{SM}). We further show $\Vert \bE - \bfGamma\bF \Vert_F^2 \leq \Vert \bE-\bL\bF\Vert_F^2$, which implies $\tr(\bfSigma_{\boldsymbol{\mathcal{E}}}) \leq \tr(\bfSigma_{\bU})$ (see  \tblue{Lemma 5} in \tblue{SM}). Thus, rather than directly subtracting $\bL\bF$, incorporating $\bF$ into the regression model as fixed predictors and allowing $\bfGamma$ to be freely estimated improves model fit. This reduces residual variance, resulting in smaller standard errors in test statistics and improved sensitivity in identifying true associations.

\textit{Inference.} To assess the association between the $l$th omics outcome and the $m$th predictor through hypothesis testing, i.e.,  $H_0: \bB_{lm}=0$ vs $H_1: \bB_{lm}\neq0$, the test statistic can be defined by $t_{lm} = \widehat{\bB}_{lm}/se(\widehat{\bB}_{lm}),$ where $se(\widehat{\bB}_{lm})=\sqrt{\widehat{\bfSigma}_{\boldsymbol{\mathcal{E}}_{ll}}[(\bX\bX^\top)^{-1}]_{mm}}$, and $t_{lm}$ follows a $t$-distribution with the degree of the freedom $n-(q+K)$.
We next perform the multiple testing correction. The independence among test statistics is a key assumption of many multiple testing correction methods, for example, the commonly used FDR control \citep{BHFDR:1995}. In CoReg, the near-diagonal structure of $\Cov(\boldsymbol{\mathcal{E}}_i)$ reflects the approximate independence of residuals, ensuring that the multiple tests of omics variables can be treated as independent. Thus, applying  valid multiple-testing correction procedure with CoReg improves the reliability and replicability of simultaneous high-dimensional inference. 

\section*{Simulation Study}\label{Sec:Simulation}

We conducted simulation studies to assess the performance of CoReg under various settings. We also evaluated the replicability of findings by CoReg between two independent simulated datasets with identical true signals and benchmarked it with existing methods.

We simulated high-dimensional correlated omics variables $\bY$ from a multivariate normal distribution: $\bY_i \sim \text{MVN} (\mathbf{B} \bX_i, \bSig)$, where we set the dimension of the response variable as $p=500$, $\bX_i$ as fixed predictors, and the corresponding coefficients $\mathbf{B}$. We specified the large covariance matrix  $\bfSigma$ based on an interconnected block diagonal form, reflecting the co-expression network structure commonly encountered in omics research. Specifically, we constructed the covariance matrix $\bfSigma$, with three dense diagonal blocks, each comprising 100 omics variables. The intra-correlations within these blocks varied: the first block exhibited a mean correlation of 0.8, the second block had a mean correlation of 0.6, and the third had a mean correlation of 0.4. Within each block, correlation values were generated from a normal distribution centered at the respective mean correlation value for that block. Furthermore, we assigned a blockwise negative inter-correlation of -0.4 between the first and second blocks. The heatmap of the correlation structure used in this simulation study is shown in \tblue{Fig. S3} in  \tblue{SM}.  We considered several settings with various sample sizes $n=\{200,500\}$ and noise levels $\sigma^2=\{0.5,1\}$. For each setting, we repeated the above simulation procedure 100 times.

\subsection*{Statistical Inference Accuracy} We applied CoReg, Ordinary Least Squares (OLS), and SVA to the simulated datasets to identify outcomes that are associated with predictors. For each method, omics variables were selected if corresponding corrected $p$-values of $\widehat{\mathbf{B}}$ fell below the specified threshold. The performance of each method was evaluated by comparing selected suprathreshold omics variables, with reference to the ground truth of those $\mathbf{B}\neq 0$. The assessment metrics include sensitivity, specificity, $F_1$ score, the area under the ROC curve (AUC), and the false discovery rate (FDR), as summarized in \Cref{Tab:Simulation1}. ROC curves, derived from sensitivity and specificity across varying cut-off thresholds applied to the regression $p$-values of all omics variables, are shown in \Cref{fig:ROC}, where a higher AUC reflects better recovery of omics variables truly associated with the covariate while controlling false positive findings. CoReg outperforms the competing methods across three key metrics-sensitivity, $F_1$ score, and AUC. Although OLS yields higher specificity and lower FDR, these values are largely driven by its extremely low sensitivity; by classifying most signals as negatives, OLS attains these metrics at the cost of a high false negative rate. In contrast, CoReg can maximally identify true signals while controlling false positive findings, as evidenced by its high $F_1$ score, which reflects a balanced trade-off between sensitivity and false discovery, thus offering more reliable findings.

\begin{table*}[h]
\centering
\caption{Simulation results: the sensitivity, specificity, $F_1$, AUC, and FDR were compared for all methods across 100 replications.}
\label{Tab:Simulation1}
\begin{tabular}{llrrrrrrrr}
\toprule
$\sigma^2$& & \multicolumn{3}{c}{$n=200$} & & \multicolumn{3}{c}{$n=500$} \vspace{.2cm} \\
& & OLS & SVA & CoReg  & & OLS & SVA & CoReg\\ 
  \cmidrule{3-5} \cmidrule{7-9}
\multirow{5}{*}{$0.5$} &  Sensitivity &0.214 (0.21) & 0.288 (0.15) & \textbf{0.664} (0.15) &&  0.823 (0.12) & 0.848 (0.15) & \textbf{0.948} (0.05) \\ 
&  Specificity &  \textbf{0.994} (0.01) & 0.956 (0.02) & 0.963 (0.02)  &&  \textbf{0.977} (0.01) & 0.949 (0.02) & 0.950 (0.02) \\ 
&  $F_1$ &  0.305 (0.27) & 0.417 (0.18) & \textbf{0.776} (0.11)   &&  0.889 (0.08) & 0.893 (0.10) & \textbf{0.955} (0.03) \\ 
&  AUC & 0.852 (0.08) & 0.683 (0.09) &\textbf{ 0.905} (0.06) && 0.975 (0.03) & 0.938 (0.07) & \textbf{0.986} (0.02) \\ 
 &FDR & \textbf{0.018} (0.03) & 0.159 (0.18) & 0.041 (0.03) &&  \textbf{0.020} (0.01) & 0.045 (0.03) & 0.037 (0.02) \\ 
   \hline
\multirow{5}{*}{$1$} & Sensitivity & 0.030 (0.07) & 0.061 (0.11) & \textbf{0.407} (0.17)  &&  0.346 (0.21) & 0.414 (0.15) & \textbf{0.753} (0.12) \\ 
&  Specificity &  \textbf{0.998} (0.00) & 0.986 (0.01) & 0.972 (0.02) && \textbf{0.988} (0.01) & 0.933 (0.02) & 0.952 (0.02) \\ 
&  $F_1$ &0.051 (0.11) & 0.095 (0.17) & \textbf{0.550} (0.17) && 0.473 (0.24) & 0.552 (0.15) & \textbf{0.837} (0.09) \\ 
&  AUC & 0.731 (0.09) & 0.569 (0.07) & \textbf{0.814} (0.08)  &&  0.888 (0.06) & 0.737 (0.09) & \textbf{0.932} (0.05) \\ 
&FDR &  0.079 (0.19) & 0.604 (0.42) & \textbf{0.051} (0.05)   && \textbf{0.030} (0.06) & 0.127 (0.10) & 0.045 (0.03) \\ 
\bottomrule
\end{tabular}
\end{table*}

\begin{figure}[h]
    \centering
    \includegraphics[width=0.44\linewidth]{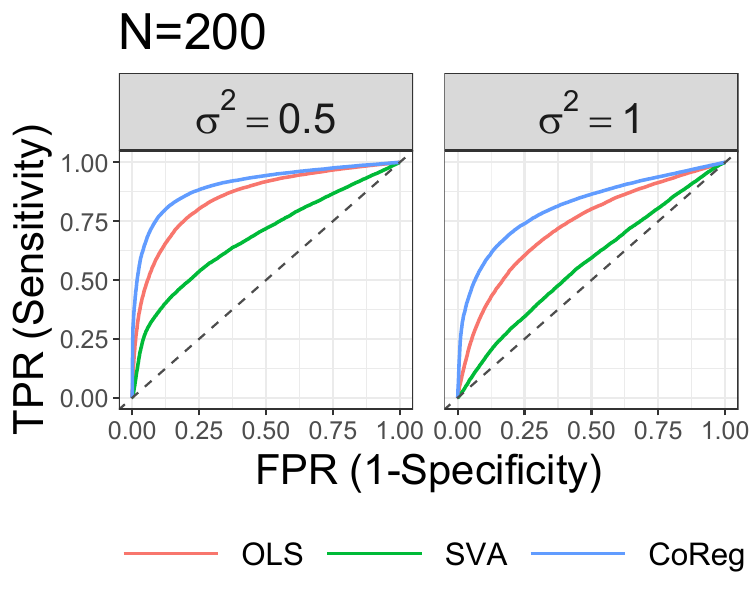}
    \includegraphics[width=0.44\linewidth]{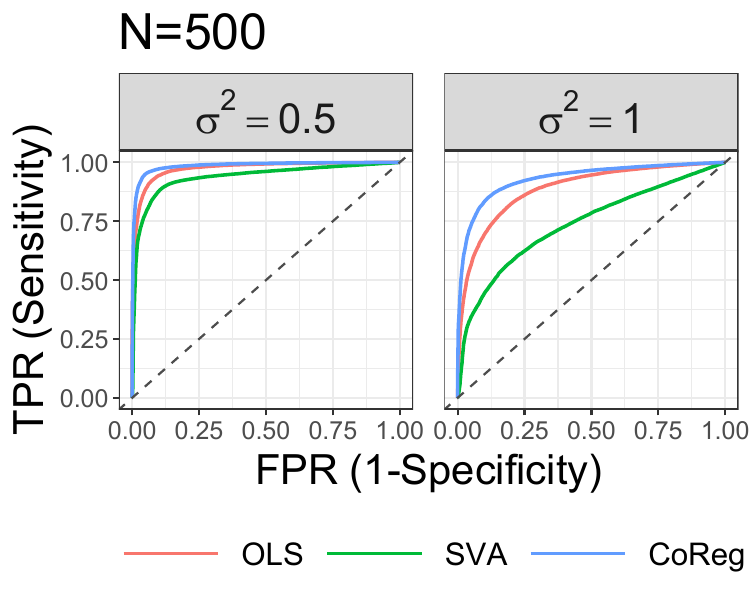}
    \caption{ROC curves (Top: $n=200$, Bottom: $n=500$) illustrating the performance of different methods in selecting true phenotype-related omics variables under varying cut-off thresholds.}
    \label{fig:ROC}
        \vspace{-5mm}
\end{figure}

Beyond inference accuracy, we evaluated computational efficiency. CoReg scales well to large sample sizes ($n$) and high-dimensional outcomes ($p$), showing only minimal runtime increases under conditions that burden traditional methods (see \tblue{Fig. S2} in \tblue{SM}).

\subsection*{Replicability Analysis}  To assess whether CoReg's improved inference accuracy also translates to enhanced cross-study replicability, we conducted an additional simulation analysis using synthetic datasets designed to emulate experiments conducted by two laboratories. To this end, we generated two independent datasets following the same data generation procedure in the previous simulation study. The two datasets shared identical true signal locations (i.e., the same non-zero entries in $\bB$), while other conditions, including effect sizes and noise levels, were varied to reflect the heterogeneity of experimental conditions. Specifically, we considered two scenarios. In the first scenario, we varied the noise levels between datasets-setting $\sigma^2=0.5$ for dataset 1 and $\sigma^2=1$ for dataset 2-while keeping the effect size at $0.3$. In the second scenario, we varied the effect sizes-assigning $0.3$ for dataset 1 and $0.1$ for dataset 2-while keeping the noise level at $\sigma^2 = 0.5$. For each scenario, we used a sample size of $n=200$ and conducted 100 replications. True positives were counted for each dataset, and replicability performance was evaluated based on the number of true positives shared across the two datasets.

The results from the first scenario are presented as a Venn diagram in \Cref{fig:Replicability} (see results of scenario 2 in \tblue{SM}). A larger intersection area represents higher replicability, suggesting that the method yields consistent discoveries across different datasets. CoReg's improved sensitivity (larger number of true positives in both datasets) leads to higher replicability (larger overlap in discoveries). Specifically, the intersection area (number of overlapped true positive findings from dataset 1 and dataset 2) for CoReg is 195 (64.8\%), compared to OLS at 63 (26.4\%) and SVA at 87 (35.4\%).

\begin{figure*}[h]
    \centering
\includegraphics[width=0.75\linewidth]{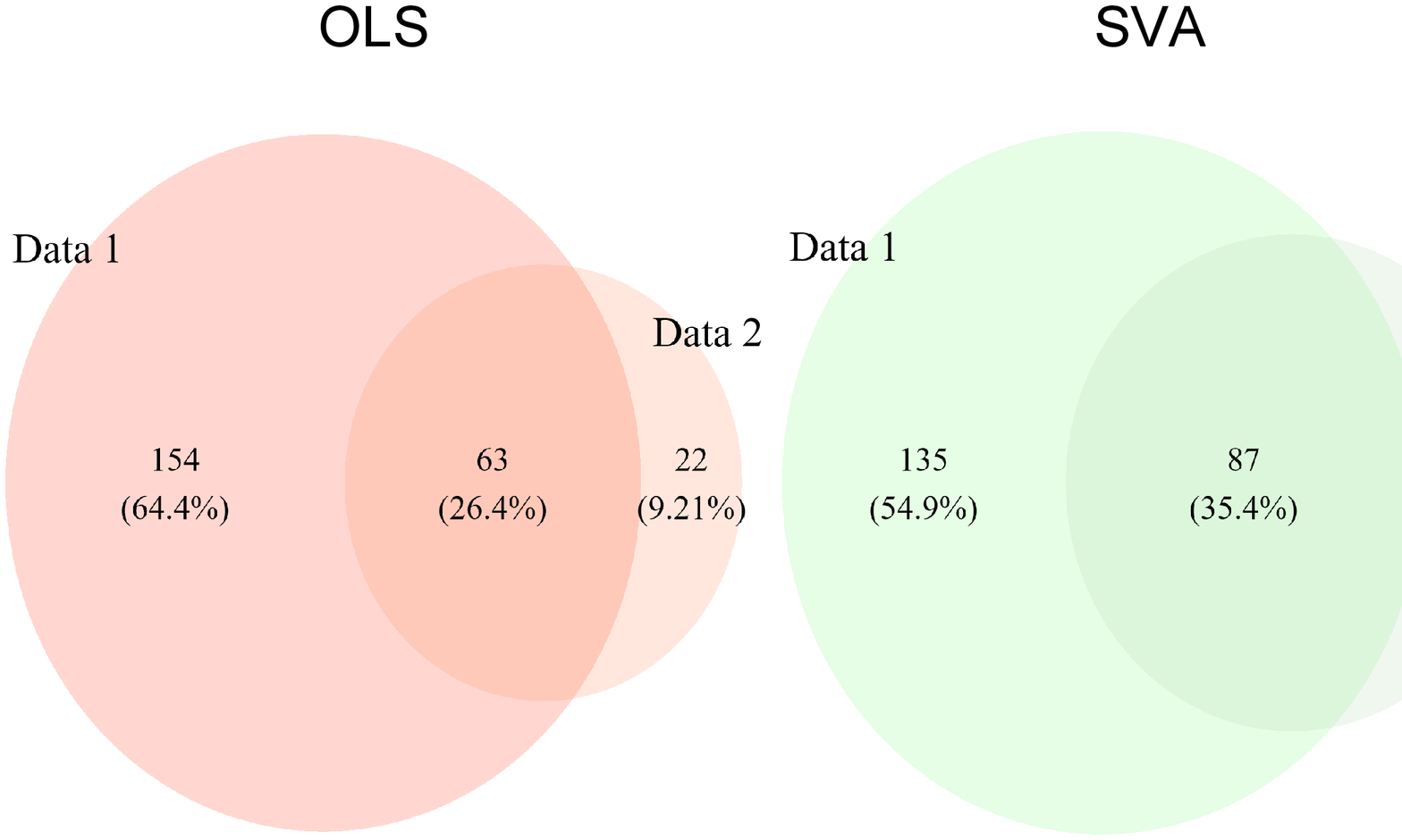}
    \caption{Replicability analysis results. Two omics studies with identical covariance structure and predictor-associated variables but differing noise levels were simulated. OLS, SVA, and CoReg were applied separately, and replicability was assessed by the overlap of true positives. CoReg yielded substantially higher overlap by recovering more true signals in both datasets.}
    \label{fig:Replicability}
    \vspace{-5mm}
\end{figure*}
\vspace{-5mm}

\section*{Age-related Omics Data Analysis}\label{Sec:Real}

CoReg is generally applicable to high-dimensional continuous omics outcomes. Here, we demonstrate the application of CoReg to two high-dimensional omics datasets. The first is a DNA methylation-based epigenetics dataset from the Alzheimer's Disease Neuroimaging Initiative (ADNI; \cite{Vasanthakumar:2020}), and the second is a nuclear magnetic resonance (NMR)-based metabolomics dataset from the UK Biobank (UKBB, \cite{Julkunen:2023}).

\subsection*{ADNI DNA Methylation data} We focus on DNA methylation measures at CpG (cytosine-phosphate-guanine) sites ($p = 4,200$) from $n = 366$ participants from the ADNI data study. We consider age and Alzheimer's disease diagnosis (AD) as the primary predictors of interest while adjusting covariates including biological sex. DNA methylation—the process by which methyl groups are added to the DNA molecule—is one of the key epigenetic mechanisms that regulate gene expression, significantly influencing development, aging, and overall health. As people age, DNA methylation levels tend to show a global decrease, reflecting significant epigenetic changes that influence cellular function and the aging process \citep{Horvath:2013}. Moreover, numerous studies have shown that DNA methylation patterns are altered in the Alzheimer's disease group, suggesting that these epigenetic alterations may be involved in the disease's pathogenesis \citep{MASTROENI:2010}. We applied CoReg to this dataset to identify CpG sites associated with age and AD diagnosis, and compared its performance with competing methods.

CoReg detected 543 CpG sites significantly associated with age (12.93\%) after multiple-testing correction, compared with only 32 (0.76\%) by OLS and 10 (0.24\%) by SVA. For AD diagnosis, CoReg identified 54 CpG sites (1.29\%) with significant associations, whereas neither OLS nor SVA detected any. These findings are well aligned with prior studies; for example, \citep{Sun:2023} reported 52 CpG sites integratively associated with DNA methylation, gene expression, and AD. In addition, CoReg identified \texttt{cg00928580}, annotated to the clusterin (CLU) gene on chromosome 8, a well-established Alzheimer’s disease risk locus with multiple prior studies reporting AD-associated differential expression of CLU \citep{Foster:2019(CLU)}. CoReg also detected \texttt{cg17132004}, annotated to SPI1 gene (also known as PU.1), a transcription factor that are linked to AD \citep{Shireby:2022}, particularly through its regulatory role in microglial function \citep{PIMENOVA:2021}. In light previous findings of strong associations between DNA methylations and both age or AD, these results highlight the improved sensitivity of CoReg, whereas OLS and SVA suffer from a lack of rejection power in identifying meaningful signals. In \tblue{Fig. S7} in \tblue{SM}, we systematically investigated the distributions of $p$-values of all CpG sites for methods, and found that the distribution of CoReg facilitates easier separation of significant findings  potentially driven by increased rejection power. The improved sensitivity is well aligned with the findings in the simulation study.

\subsection*{UKBB Metabolomics data} 
In the second application, we evaluated associations between 249 plasma metabolites measured by NMR spectroscopy and liver health and age for $n=7328$ participants in the UKBB.
It has been well documented that blood metabolites change with age \citep{Yu:2012MetaAge}. Furthermore, considering the liver's central role in metabolism, we investigated the association between metabolic profiles and liver condition, as assessed by FR-PDFF (Fat-Referenced Liver Proton Density Fat Fraction), a noninvasive magnetic resonance imaging (MRI)-derived metric that quantifies liver fat content.

Similar to the ANDI study and simulation study, our sensitivity is much improved (see  \tblue{Table S1} in \tblue{SM}). Specifically, for age, CoReg detected 245 associations (98.39\%) compared to 229 (91.97\%) identified by both OLS and SVA. For FR-PDFF, CoReg identified 240 associations (96.39\%) versus 199 (79.92\%) by OLS and SVA. This strong association obtained by CoReg is consistent with previous research \citep{Gnatiuc:2023LiverMeta}, which reported that 152 out of 180  (84.44\%) metabolites were significantly associated with PDFF at a stringent threshold of $p < 0.0001$. Additionally, CoReg completed the analysis in 0.43 seconds, whereas SVA required 3724.31 seconds, indicating that CoReg is more computationally efficient.

We conducted additional validation experiments by randomly sub-sampling the original data into subsets with varying sample sizes ranging from $n= 500$ to 5000 (in increments of 500), performing 20 repetitions at each size to assess the performance (e.g., statistical power) of all methods across different sample sizes. Statistical theory establishes that larger sample sizes generally yield greater power and, consequently, higher sensitivity across omics variables. Accordingly, a more powerful method can detect a number of significant findings comparable to those from the full dataset, even with a smaller sample size. As demonstrated in \tblue{Fig. S8} in \tblue{SM}, CoReg can identify a number of significant findings nearly comparable to those from the full dataset even with small samples (e.g., $n=500$), whereas competing methods recover only about one-third as many. These results suggest that CoReg achieves substantially greater statistical power for datasets with relatively small sample sizes (e.g., in the hundreds). Given that the sample size of most omics studies is below this range, CoReg is particularly advantageous in improving sensitivity and thus cross-study replicability.

\section*{Discussion}\label{Sec:Conclusion}

We proposed, CoReg, a new multivariate statistical framework for the analysis of omics data which accounts for dependence structures. 
CoReg extracts latent interconnected co-expression communities and translates them into factors that maximally explain the residual covariance matrix. Compared to classical methods such as DE tools, CoReg  improves the accuracy of simultaneous multivariate inference and consequently  enhances replicability across studies. In the diverse omics datasets we examined, all covariance matrices consistently exhibited an (latent) interconnected community structure. For omics covariance matrices that lack an interconnected community or block structure, CoReg performs comparably to the mass-univariate approach—analogous to the well-established robustness of GEE under misspecified working correlation structures. In addition,  we implemented computationally efficient algorithms for CoReg, with an average runtime of a few minutes for hundreds of thousands of correlated outcomes. In conclusion, CoReg provides an alternative to existing omics analysis tools by explicitly modeling complex dependencies among high-dimensional variables and is potentially applicable to a wide range of omics data. 

CoReg is publicly available at \url{https://github.com/hwiyoungstat/CoReg}.

\bibliographystyle{apalike} 
\bibliography{reference_etal}

\newpage
{\LARGE{\textbf{Supplementary Material (SM) for ``Modeling Dependence in Omics Association Analysis via Structured Co-Expression Networks to Improve Power and Replicability''}}}

\setcounter{figure}{0}
\renewcommand{\thefigure}{S\arabic{figure}}

\setcounter{table}{0}
\renewcommand{\thetable}{S\arabic{table}}


\section{Supporting Propositions and Lemmas for CoReg}

\begin{proposition}\label{prop1}
Assume that the pairwise correlations among variables depend only on their community, i.e., $\Cor(\bY^{j},\bY^{j^\prime}) = \rho_{kk^\prime}$ when $j \in G_k$, and $j^\prime \in G_{k^\prime}$. If the nonzero elements in the $k$th column of the loading matrix $\bL$ match the omics variables assigned to community $G_k$ (i.e., $\mathrm{Supp}(\bL_{\cdot k}) = G_k$ for all $k=1,\cdots,K$), then the CFA-derived covariance matrix $\bL \bfSigma_{\bF}\bL^\top$ recovers the omics correlation matrix.
\end{proposition}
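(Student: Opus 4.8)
The plan is to produce the decomposition explicitly --- to choose the loading sub-vectors $\boldsymbol{\ell}_k$ and the factor covariance $\bfSigma_{\bF}$ so that $\bL\bfSigma_{\bF}\bL^\top$ reproduces every pairwise correlation of the omics variables, leaving a purely diagonal remainder that is absorbed into $\bfSigma_{\bU}$. First I would reorder the variables so that the communities $G_1,\dots,G_K$ come first, in consecutive blocks of sizes $p_1,\dots,p_K$, and the singletons $G_0$ last. Under the stated assumption the correlation matrix $\bR$ is then block-structured: its $(j,j')$ entry equals $\rho_{kk'}$ whenever $j\in G_k$, $j'\in G_{k'}$ and $j\ne j'$, equals $1$ on the diagonal, and equals $0$ in every off-diagonal position of a row or column indexed by $G_0$. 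Compactly, $\bR=\bZ\bM\bZ^\top+\bD$, where $\bZ\in\{0,1\}^{p\times K}$ is the community-membership indicator matrix (row $j$ is $\mathbf{e}_k^\top$ for $j\in G_k$ and $\mathbf{0}^\top$ for $j\in G_0$), $\bM=[\rho_{kk'}]_{1\le k,k'\le K}$ is the community-level correlation matrix, and $\bD=\operatorname{diag}(d_j)$ with $d_j=1-\rho_{kk}$ for $j\in G_k$ and $d_j=1$ for $j\in G_0$.

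Next I would take $\boldsymbol{\ell}_k=\mathbf{1}_{p_k}$ for every $k$, so that the loading matrix $\bL$ of \eqref{eq:Loading} is exactly $\bZ$ (the zero block-row for $G_0$ arises automatically) and $\mathrm{Supp}(\bL_{\cdot k})=G_k$ as the hypothesis demands, and set $\bfSigma_{\bF}=\bM$. The verification is then a one-line computation: for $j\in G_k$ and $j'\in G_{k'}$ the $j$th and $j'$th rows of $\bL$ are $\mathbf{e}_k^\top$ and $\mathbf{e}_{k'}^\top$, so $(\bL\bfSigma_{\bF}\bL^\top)_{jj'}=\mathbf{e}_k^\top\bfSigma_{\bF}\mathbf{e}_{k'}=[\bfSigma_{\bF}]_{kk'}=\rho_{kk'}$, while the entry is $0$ as soon as $j$ or $j'$ lies in $G_0$; equivalently, the $(k,k')$ block of $\bL\bfSigma_{\bF}\bL^\top$ is $\boldsymbol{\ell}_k[\bfSigma_{\bF}]_{kk'}\boldsymbol{\ell}_{k'}^\top=\rho_{kk'}\mathbf{1}_{p_k}\mathbf{1}_{p_{k'}}^\top$, which is precisely the block factorization displayed in \eqref{eq:UB_Matrix}. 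Consequently $\bR$ and $\bL\bfSigma_{\bF}\bL^\top$ agree in every off-diagonal entry and differ on the diagonal only by $\bR_{jj}-(\bL\bfSigma_{\bF}\bL^\top)_{jj}=d_j\ge 0$, so $\bR=\bL\bfSigma_{\bF}\bL^\top+\bfSigma_{\bU}$ with the diagonal $\bfSigma_{\bU}=\bD$. Thus the factor-driven term carries all of the between-variable correlation of $\bR$ --- this is the sense in which CFA ``recovers'' the omics correlation matrix --- and the same computation establishes the loading/community correspondence used in \eqref{eq:UB_Matrix}.

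The one genuine subtlety, and the main obstacle, is that $\bfSigma_{\bF}=\bM=[\rho_{kk'}]$ should itself be a valid (positive semidefinite) factor covariance, which is a strictly stronger requirement than $\bR\succeq\mathbf{0}$: for instance, when $K=1$ and $-1/(p-1)\le\rho_{11}<0$ the matrix $\bR$ is a legitimate correlation matrix but $[\rho_{11}]<0$. I would address this via the canonical block-correlation representation \citep{Archakov:2024}: the eigenvalues of $\bR$ are $1-\rho_{kk}$ with multiplicity $p_k-1$ for each $k$ together with the eigenvalues of the symmetric matrix $\widetilde{\bM}=\operatorname{diag}(1-\rho_{kk})+P^{1/2}\bM P^{1/2}$ with $P=\operatorname{diag}(p_1,\dots,p_K)$, so $\bR\succeq\mathbf{0}$ only forces $\widetilde{\bM}\succeq\mathbf{0}$, whereas $\bfSigma_{\bF}\succeq\mathbf{0}$ amounts to the stronger $P^{1/2}\bM P^{1/2}\succeq\mathbf{0}$. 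This holds in the regime of genuinely dense, strongly intra-correlated modules that \eqref{eq:Dense} is designed to extract, in which case the construction is a bona fide CFA; outside that regime $\bfSigma_{\bF}$ is read as the moment-matching parameter that reproduces $\bR$. Apart from this point the argument is a short direct verification.
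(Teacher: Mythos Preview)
Your proposal is correct and follows essentially the same route as the paper: set the loading matrix to the 0--1 community indicator $\bL=\bZ$, set $\bfSigma_{\bF}=[\rho_{kk'}]$, and verify entrywise that $(\bL\bfSigma_{\bF}\bL^\top)_{jj'}=\be_k^\top\bfSigma_{\bF}\be_{k'}=\rho_{kk'}$, which is exactly the paper's computation. Your treatment is more careful than the paper's --- you explicitly handle the singleton block $G_0$, separate the diagonal remainder into $\bfSigma_{\bU}$, and flag the positive-semidefiniteness caveat on $\bfSigma_{\bF}$ that the paper omits --- but the core construction and verification are identical.
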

Given the $K$ communities $\{G_1,\cdots,G_K\}$ identified by the dense subgraph extraction, we express the loading matrix $\bL$ such that its $(j,k)$th entry is given by $$\bL_{jk} = \begin{cases}
    1, & \text{if} \ j \in G_{k}\\
    0, & \text{otherwise}
\end{cases}, j \in 1,\cdots, p, \ k=1,\cdots,K.$$ We further define the $K \times K$ factor to factor covariance matrix $\bfSigma_{\bF}(k,k^\prime) = \rho_{kk^\prime}$. Then $(\bL \bfSigma_{\bF}\bL^\top)_{jj^\prime} = \be_k^\top \bfSigma_{\bF} \be_{k^\prime} = \rho_{kk^\prime} = \Cor(\bY^{j},\bY^{j^\prime})$, where $\be_k$, and $\be_{k^\prime}$ denote the $k$th and $k^\prime$th standard basis vectors in $\mathbb{R}^K$, respectively. This proposition implies that if the omics variables are reordered such that those belonging to the same community $G_k$ appear consecutively, yielding the block diagonal structured $\bL$ in \eqref{eq:Loading}, then the resulting CFA-derived covariance matrix $\bL \bfSigma_{\bF}\bL^\top$ exhibits the interconnected block correlation structure.

\begin{lemma}\label{lem:independence}
   The  original predictor $\bX$ and the latent factor $\bF$ are orthogonal, denoted by $\bX \indep \bF$.
\end{lemma}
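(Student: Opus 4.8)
\textit{Proof proposal.} The plan is to exploit two facts: (i) $\bF$ is obtained from the Step~1 residual $\bE$ by a fixed linear operation acting on the feature index, and (ii) least-squares residuals are orthogonal to the design in sample space. First I would recall from Step~1 that $\widehat{\bB} = \bY\bX^\top(\bX\bX^\top)^{-1}$ solves the normal equations, so the fitted residual matrix can be written as $\bE = \bY - \widehat{\bB}\bX = \bY\,\bM_{\bX}$, where $\bM_{\bX} = \bI_n - \bX^\top(\bX\bX^\top)^{-1}\bX$ is the symmetric idempotent projector onto the orthogonal complement of the row space of $\bX$ in $\mathbb{R}^n$. Since $\bM_{\bX}\bX^\top = \bO$, this gives at once $\bE\bX^\top = \bY\bM_{\bX}\bX^\top = \bO$; that is, every row of $\bE$ (viewed as a vector in sample space $\mathbb{R}^n$) is orthogonal to every row of $\bX$.

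Next I would invoke the closed-form factor estimator of Step~2, $\bF = (\widehat{\bL}^\top\widehat{\bL})^{-1}\widehat{\bL}^\top\bE$, which expresses $\bF$ as the left-multiplication of $\bE$ by the $K\times p$ matrix $(\widehat{\bL}^\top\widehat{\bL})^{-1}\widehat{\bL}^\top$. This prefactor depends only on the co-expression–guided loadings (never on $\bX$) and, more importantly for the argument, acts purely on the feature index, leaving the sample index untouched. Right-multiplying by $\bX^\top$ therefore yields
\[
\bF\bX^\top = (\widehat{\bL}^\top\widehat{\bL})^{-1}\widehat{\bL}^\top\,\bE\bX^\top = (\widehat{\bL}^\top\widehat{\bL})^{-1}\widehat{\bL}^\top\,\bO = \bO,
\]
so the sample cross-product of $\bF$ and $\bX$ vanishes, which is precisely the asserted orthogonality $\bX \indep \bF$.

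Since both ingredients are essentially immediate, there is no substantive obstacle; the only point requiring care is bookkeeping about conventions. In particular, "$\bX \indep \bF$" here denotes empirical orthogonality in sample space, i.e.\ $\bX\bF^\top = \bO$ (equivalently the rows of $\bF$ lie in $\mathrm{rowspace}(\bX)^\perp \subseteq \mathbb{R}^n$), rather than probabilistic independence. If one wants to be thorough, I would also note that when $\bX$ is rank-deficient one replaces $(\bX\bX^\top)^{-1}$ by a generalized inverse and $\bM_{\bX}$ by the projector onto $\mathrm{rowspace}(\bX)^\perp$; the identity $\bM_{\bX}\bX^\top = \bO$ still holds, and the conclusion is unchanged. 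This orthogonality is exactly what is needed downstream to justify (in Lemmas~2--3) that augmenting the regression with $\bF$ leaves $\widehat{\bB}$ unchanged and decouples the estimation of $\bB$ and $\bfGamma$.
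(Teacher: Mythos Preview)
Your proposal is correct and follows essentially the same route as the paper: write the Step~1 residual as $\bE = \bY(\bI_n - \bX^\top(\bX\bX^\top)^{-1}\bX)$, use the normal-equation identity $\bE\bX^\top = \bO$, and then propagate this orthogonality through the linear factor map $\bF = (\widehat{\bL}^\top\widehat{\bL})^{-1}\widehat{\bL}^\top\bE$ to conclude $\bF\bX^\top = \bO$. Your additional remarks on the rank-deficient case and on interpreting ``$\indep$'' as sample-space orthogonality are helpful clarifications but not needed for the core argument.
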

\begin{proof}
    Consider the regression model in the first step, $\bY = \bB\bX + \bE$, where $\bB$ can be estimated by the following least squares method,
\begin{align*}
    \widehat{\bB} = \argmin_{\bB \in \mathbb{R}^{p \times q}} \Vert \bY - \bX\bB \Vert_F^2.
\end{align*}
Since the OLS has the form $\widehat{\bB} = \bY \bX^\top (\bX\bX^\top)^{-1}$, the residual $\bE$ can be written as $\bE = \bY - \bY\bX^\top(\bX\bX^\top)^{-1}\bX = \bY(\bI-\bPi_{\bX})$. We can show that $\bX$ and $\bE$ are orthogonal ($\bX \indep \bE$).
\begin{align*}
    \bX \bE^\top = \bX (\bI - \bPi_{\bX}^\top)\bY^\top = \boldsymbol{0}\in\mathbb{R}^{q\times p}.
\end{align*}
Because, based on the construction of $\bF = (\bL^\top\bL)^{-1}\bL^\top\bE$, we have $\bX\indep\bF$ ($\because \bX\bF^\top = \bX \bE^\top \bL (\bL^\top \bL)^{-1} = \boldsymbol{0}\in\mathbb{R}^{q\times K}$).
\end{proof}

\begin{lemma}\label{lem:notaffect}
$\widehat{\bB}$ estimated by minimizing $\Vert\bY-\bB\bX-\bfGamma\bF\Vert_F^2$ is identical to $\widehat{\bB}$ estimated by minimizing $\Vert\bY-\bB\bX\Vert_F^2$.
\end{lemma}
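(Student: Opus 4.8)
The plan is to treat the joint least-squares problem as a partitioned regression and exploit the orthogonality $\bX\bF^\top = \bO$ established in Lemma~\ref{lem:independence}. Write $f(\bB,\bfGamma) = \Vert \bY - \bB\bX - \bfGamma\bF \Vert_F^2$. Since $f$ is a convex quadratic in $(\bB,\bfGamma)$ and $\bX$ has full row rank (so that $\bX\bX^\top$ is invertible), any stationary point is a global minimizer; hence it suffices to solve the normal equations and read off the $\bB$-block.

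First I would set the gradients of $f$ with respect to $\bB$ and $\bfGamma$ to zero, obtaining
\begin{align*}
 (\bY - \widehat{\bB}\bX - \widehat{\bfGamma}\bF)\bX^\top &= \bO, \\
 (\bY - \widehat{\bB}\bX - \widehat{\bfGamma}\bF)\bF^\top &= \bO.
\end{align*}
Expanding the first equation gives $\bY\bX^\top = \widehat{\bB}\,\bX\bX^\top + \widehat{\bfGamma}\,\bF\bX^\top$. The cross term vanishes because $\bF\bX^\top = (\bX\bF^\top)^\top = \bO$ by Lemma~\ref{lem:independence}, leaving $\widehat{\bB}\,\bX\bX^\top = \bY\bX^\top$, i.e.\ $\widehat{\bB} = \bY\bX^\top(\bX\bX^\top)^{-1}$. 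This is exactly the estimator obtained from minimizing $\Vert \bY - \bB\bX \Vert_F^2$ alone, as derived in the proof of Lemma~\ref{lem:independence}, which proves the claim. As a by-product, expanding the second normal equation and again using $\bX\bF^\top = \bO$ yields $\widehat{\bfGamma} = \bY\bF^\top(\bF\bF^\top)^{-1}$, and since $\widehat{\bB}\bX\bF^\top = \bO$ we may replace $\bY$ by $\bE = \bY - \widehat{\bB}\bX$ here, recovering the sequential characterization used later in the text.

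This is essentially the Frisch–Waugh–Lovell decomposition, so I do not anticipate a genuine obstacle; the single substantive ingredient is the orthogonality $\bX\bF^\top = \bO$, which is available by construction of $\bF$ from residuals already orthogonal to $\bX$. The only points requiring a line of care are the rank/invertibility conditions (nonsingularity of $\bX\bX^\top$ and $\bF\bF^\top$) and the observation that $\bF$ denotes the fixed, previously estimated factor matrix $\widehat{\bF}$ from Step~2 rather than a free parameter; with these noted, the argument is complete.
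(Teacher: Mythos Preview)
Your argument is correct and is essentially the same as the paper's: both exploit the orthogonality $\bX\bF^\top=\bO$ from Lemma~\ref{lem:independence} to decouple the joint least-squares problem, the paper by block-diagonalizing the Gram matrix of the stacked design $(\bX^\top,\bF^\top)^\top$ and reading off $\widehat{\bB}=\bY\bX^\top(\bX\bX^\top)^{-1}$, you by writing out the two normal equations and eliminating the cross term. The presentations differ only cosmetically, and your additional remarks on rank conditions and on $\bF$ being fixed are appropriate caveats.
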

\begin{proof}
 The joint estimation of $(\widehat{\bB},\widehat{\bfGamma})$ can be obtained by $(\widehat{\bB},\widehat{\bfGamma}) = \argmin_{\bB\in\mathbb{R}^{p\times q},\bfGamma\in\mathbb{R}^{p\times K}} \Vert \bY- \bB\bX - \bfGamma\bF \Vert_F^2$,
\begin{align*}
    (\widehat{\bB},\widehat{\bfGamma}) &= \bY 
    \begin{pmatrix}
     \bX^\top\\ \bF^\top   
    \end{pmatrix}
    \begin{pmatrix}
     \bX\bX^\top & \bX\bF^\top\\ 
     \bF\bX^\top & \bF\bF^\top\\
    \end{pmatrix}^{-1} \\
    &= \begin{pmatrix}\bY\bX^\top &  \bY\bF^\top\end{pmatrix} 
    \begin{pmatrix}
     (\bX\bX^\top)^{-1} & 0\\ 
     0 & (\bF\bF^\top)^{-1}\\
    \end{pmatrix}.  
\end{align*}
Thus the isolated solution for $\widehat{\bB} =  \bY \bX^\top (\bX^\top\bX)^{-1}$ is equivalent the solution without the additional predictor $\bF$.    
\end{proof}

\begin{lemma}\label{lem:joint}
Estimating $\widehat{\bB}$ and $\widehat{\bfGamma}$ by the joint procedure of $\argmin_{\bB,\bfGamma}\Vert \bY - \bB\bX - \bfGamma\bF \Vert_F^2$ is equivalent the squential estimating procedure which first estimates $\widehat{\bB}$ by  $\argmin_{\bB} \Vert \bY - \bB\bX \Vert_F^2 $ and then  $\widehat{\bfGamma}$ by $\argmin_{\bfGamma}\Vert \bE- \bGamma\bF \Vert_F^2$.    
\end{lemma}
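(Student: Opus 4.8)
The plan is to leverage the orthogonality $\bX \indep \bF$ from \Cref{lem:independence} to collapse both procedures onto the same pair of closed-form expressions. First I would record the joint solution: by \Cref{lem:notaffect} the normal equations for $(\widehat{\bB},\widehat{\bfGamma})$ are block diagonal because the cross-blocks $\bX\bF^\top$ and $\bF\bX^\top$ vanish, so $\widehat{\bB} = \bY\bX^\top(\bX\bX^\top)^{-1}$ and $\widehat{\bfGamma} = \bY\bF^\top(\bF\bF^\top)^{-1}$ (assuming, as elsewhere in the paper, that $\bF$ has full row rank so $\bF\bF^\top$ is invertible).

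Next I would turn to the sequential procedure. Its first stage is ordinary least squares, $\widehat{\bB}_{\mathrm{seq}} = \bY\bX^\top(\bX\bX^\top)^{-1}$, which is literally $\widehat{\bB}$ — so that half of the claim is immediate and is essentially the content of \Cref{lem:notaffect}. For the second stage, write $\bE = \bY - \widehat{\bB}_{\mathrm{seq}}\bX = \bY(\bI - \bPi_{\bX})$ with $\bPi_{\bX} = \bX^\top(\bX\bX^\top)^{-1}\bX$; minimizing $\Vert \bE - \bfGamma\bF\Vert_F^2$ gives $\widehat{\bfGamma}_{\mathrm{seq}} = \bE\bF^\top(\bF\bF^\top)^{-1}$. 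Thus the whole statement reduces to the single identity $\bE\bF^\top = \bY\bF^\top$.

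The only point needing an argument is that last identity, and it is where \Cref{lem:independence} does the work: $\bE\bF^\top = \bY\bF^\top - \bY\bPi_{\bX}\bF^\top$, and $\bPi_{\bX}\bF^\top = \bX^\top(\bX\bX^\top)^{-1}(\bX\bF^\top) = \bO$ since $\bX\bF^\top = \bO$. Hence $\bE\bF^\top = \bY\bF^\top$, so $\widehat{\bfGamma}_{\mathrm{seq}} = \bY\bF^\top(\bF\bF^\top)^{-1} = \widehat{\bfGamma}$, and the joint and sequential estimators agree on both components. I do not expect a real obstacle here: conceptually, $\bF$ is constructed from the $\bX$-residuals $\bE$ and therefore lies in the orthogonal complement of the row space of $\bX$, so ``residualizing against $\bX$'' in the second stage changes nothing. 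As an alternative write-up I could instead quote a Frisch--Waugh--Lovell argument: in the joint regression $\widehat{\bfGamma}$ is the coefficient obtained by regressing the $\bX$-residual of $\bY$ on the $\bX$-residual of $\bF$, and since $\bF\bPi_{\bX} = \bO$ the latter equals $\bF$ itself, reproducing exactly the sequential second stage, while the same identity $\bF\bX^\top = \bO$ shows the $\bF$-residual of $\bX$ is $\bX$, reproducing the first stage.
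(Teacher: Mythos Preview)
Your proposal is correct and follows essentially the same route as the paper: both invoke \Cref{lem:notaffect} to obtain the joint-procedure closed forms $\widehat{\bB}=\bY\bX^\top(\bX\bX^\top)^{-1}$ and $\widehat{\bfGamma}=\bY\bF^\top(\bF\bF^\top)^{-1}$, then compute the sequential second-stage estimator $\bE\bF^\top(\bF\bF^\top)^{-1}$ and reduce equality to $\bE\bF^\top=\bY\bF^\top$ via $\bX\bF^\top=\bO$ from \Cref{lem:independence}. Your write-up is slightly more thorough in that it explicitly checks the $\widehat{\bB}$ component and offers the Frisch--Waugh--Lovell reformulation, but the underlying argument is the same.
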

\begin{proof}
    \Cref{lem:notaffect} showed that $\bfGamma$ is given by $\widehat{\bfGamma} = \bY\bF^\top (\bF\bF^\top)^{-1}$. Now we will show that the solution for $\widehat{\bfGamma}$ is equivalent to $\widehat{\bfGamma} = \argmin_{\bfGamma}\Vert \bE- \bfGamma\bF \Vert_F^2$, where $\bE$ is the residual after regression $\bY$ only on $\bX$, i.e., $\bE = \bY - \bY\bX^\top(\bX\bX^\top)^{-1}\bX$. 
\begin{align*}
    \widehat{\bfGamma} &= \argmin_{\bfGamma\in\mathbb{R}^{p\times K}}\Vert \bE- \bGamma\bF \Vert_F^2\\
    &= \bE \bF^\top (\bF\bF^\top)^{-1}\\
    &= (\bY - \bY\bX^\top(\bX\bX^\top)^{-1}\bX)\bF^\top (\bF\bF^\top)^{-1}\\
    &= \bY\bF^\top (\bF\bF^\top)^{-1}.
\end{align*}
\end{proof}

\begin{lemma}\label{lem:diagonal}
Assuming the factor model $\bE_i=\bL\bF_i+\bU_i , \Cov(\bE_i) = \bL \bfSigma_{\bF}\bL^\top + \bfSigma_{\bU}$ holds, then at the population level, $\Cov(\boldsymbol{\mathcal{E}}_i) = \Cov(\bE_i-\bfGamma\bF_i)$, is diagonal.  
\end{lemma}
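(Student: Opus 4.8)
The plan is to show that, at the population level, the least-squares coefficient $\bfGamma$ returned by Step~3 coincides exactly with the loading matrix $\bL$, so that the factor-adjusted residual collapses to the uniqueness term of the factor model, $\boldsymbol{\mathcal{E}}_i=\bU_i$, whose covariance is diagonal by hypothesis.

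First I would record the population form of $\widehat{\bfGamma}$: by \Cref{lem:joint}, $\widehat{\bfGamma}$ is the coefficient of regressing $\bE$ on $\bF$, i.e.\ $\widehat{\bfGamma}=\bE\bF^\top(\bF\bF^\top)^{-1}$, and since $\bE_i$ and $\bF_i$ are mean-zero, its population limit is $\bfGamma=\Cov(\bE_i,\bF_i)\,[\Cov(\bF_i)]^{-1}$, the coefficient of the best linear predictor of $\bE_i$ from $\bF_i$. Next I would substitute the factor-model hypothesis $\bE_i=\bL\bF_i+\bU_i$ together with the usual factor-model orthogonality $\Cov(\bF_i,\bU_i)=\boldsymbol{0}$: this gives $\Cov(\bE_i,\bF_i)=\bL\,\Cov(\bF_i)=\bL\bfSigma_{\bF}$, hence $\bfGamma=\bL\bfSigma_{\bF}\bfSigma_{\bF}^{-1}=\bL$, and therefore $\boldsymbol{\mathcal{E}}_i=\bE_i-\bfGamma\bF_i=\bE_i-\bL\bF_i=\bU_i$, so $\Cov(\boldsymbol{\mathcal{E}}_i)=\Cov(\bU_i)=\bfSigma_{\bU}$. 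The right-hand side is diagonal because the factor model assumed in the lemma (see \eqref{eq:CFA}) has a diagonal uniqueness covariance $\bfSigma_{\bU}$, which closes the argument.

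The step I expect to require the most care is the meaning of ``at the population level.'' The quantity actually fed into Step~3 is the estimated score $\widehat{\bF}_i=(\bL^\top\bL)^{-1}\bL^\top\bE_i=\bF_i+(\bL^\top\bL)^{-1}\bL^\top\bU_i$, which differs from the true latent factor by a noise-averaging term; because that term mixes in the uniqueness $\bU_i$, the exact identity $\bfGamma=\bL$ holds for the true factor $\bF_i$---and, more generally, for any nonsingular linear image $\bR\bF_i$ of it, since regressing $\bE_i$ on $\bR\bF_i$ returns $\bL\bR^{-1}$ and $(\bL\bR^{-1})(\bR\bF_i)=\bL\bF_i$---but not verbatim for $\widehat{\bF}_i$. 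I would therefore make explicit the population-level idealization that the recovered factor is a nonsingular linear version of $\bF_i$; this is asymptotically justified in the relevant high-dimensional regime because each module $G_k$ has many members, so the averaged uniqueness $\tfrac1{p_k}\sum_{j\in G_k}\ell_{k,j}U_{ij}$ becomes negligible, and under this identification the computation above delivers the diagonal $\Cov(\boldsymbol{\mathcal{E}}_i)=\bfSigma_{\bU}$.
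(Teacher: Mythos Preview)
Your proof is correct and rests on the same pivotal ingredient as the paper's---the factor-model orthogonality $\Cov(\bU_i,\bF_i)=\boldsymbol{0}$---but you organize the computation more directly. You compute the population regression coefficient $\bfGamma=\Cov(\bE_i,\bF_i)\,\bfSigma_{\bF}^{-1}=\bL$ and conclude $\boldsymbol{\mathcal{E}}_i=\bU_i$ outright. The paper instead expands $\Cov(\bE_i-\bfGamma\bF_i)$ bilinearly, factors it as $(\bL-\bfGamma)\bfSigma_{\bF}(\bL-\bfGamma)^\top+\bfSigma_{\bU}$, and then shows the quadratic-form piece vanishes via $(\bL-\bfGamma)\bF\bF^\top=\bL\bF\bF^\top-\bE\bF^\top=-\bU\bF^\top=0$; it never states $\bfGamma=\bL$ explicitly, though that identity is implicit once $\bfSigma_{\bF}$ is invertible. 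Your route is shorter and more transparent about \emph{why} the residual collapses to $\bU_i$; the paper's expansion has the minor advantage of displaying the intermediate decomposition $(\bL-\bfGamma)\bfSigma_{\bF}(\bL-\bfGamma)^\top+\bfSigma_{\bU}$, which remains informative even under loading misspecification. Your third paragraph on the gap between the true factor $\bF_i$ and the estimated score $\widehat{\bF}_i=(\bL^\top\bL)^{-1}\bL^\top\bE_i$ goes beyond what either proof strictly needs---both treat $\bF_i$ as the true latent factor per the lemma's hypothesis---but it is a thoughtful caveat that the paper does not address.
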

\begin{proof}
\begin{align*}
    \Cov(\boldsymbol{\mathcal{E}}_i) &= \Cov(\bE_i-\bfGamma\bF_i)\\
    &= \Cov(\bE_i) + \Cov(\bfGamma\bF_i) - (\Cov(\bE_i,\bfGamma\bF_i)+\Cov(\bfGamma\bF_i,\bE_i))\\
    &= \bL\bfSigma_{\bF}\bL^\top + \bfSigma_{\bU} + \bfGamma\bfSigma_{\bF}\bfGamma^\top - \bL\bfSigma_{\bF}\bfGamma^\top- \bfGamma\bfSigma_{\bF}\bL^\top\\
    &=(\bL-\bfGamma)\bfSigma_{\bF}(\bL-\bfGamma)^\top + \bfSigma_{\bU}
\end{align*}
Since $\bfGamma = \bE\bF^\top (\bF\bF^\top)^{-1}$, and assuming $\bF$ is centered, $(\bL-\bfGamma)\bfSigma_{\bF}(\bL-\bfGamma)^\top$ can be expressed by
\begin{align*}
    &(\bL-\bE\bF^\top (\bF\bF^\top)^{-1})\bF\bF^\top (\bL-\bE\bF^\top(\bF\bF^\top)^{-1})^\top\\
    &=(\bL\bF-\bE\bPi_{\bF})\bF^\top (\bL-\bE\bF^\top(\bF\bF^\top)^{-1})^\top\\
    &=(\bL\bF\bF^\top-\bE\bF^\top) (\bL-\bE\bF^\top(\bF\bF^\top)^{-1})^\top\\
    &=-\bU\bF^\top (\bL-\bE\bF^\top(\bF\bF^\top)^{-1})^\top = 0.
\end{align*}    
\end{proof}

\begin{lemma}\label{lem:smallererror}
    The residual from model $\bY = \bB\bX+\bfGamma\bF$ is smaller in terms of the squared Frobenius norm than that of the model $\bY = \bB\bX+\bL\bF$, i.e., $\Vert \bY-\bB\bX-\bfGamma\bF\Vert_F^2 \leq \Vert \bY-\bB\bX-\bL\bF\Vert_F^2$.
\end{lemma}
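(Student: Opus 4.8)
\textbf{Proof proposal for Lemma \ref{lem:smallererror}.}

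The plan is to recognize this as an instance of the general fact that least-squares minimization over a free coefficient matrix can only decrease the residual compared to any fixed choice of that matrix. First I would fix $\widehat{\bB}$ as estimated in Step 1 (which, by Lemma \ref{lem:notaffect}, is the same $\widehat{\bB}$ appearing in the joint fit), and recall from Lemma \ref{lem:joint} that the joint minimization decouples: after removing $\widehat{\bB}\bX$, the problem reduces to comparing $\Vert \bE - \widehat{\bfGamma}\bF \Vert_F^2$ against $\Vert \bE - \bL\bF \Vert_F^2$, where $\bE = \bY - \widehat{\bB}\bX$ and $\widehat{\bfGamma} = \argmin_{\bfGamma} \Vert \bE - \bfGamma\bF \Vert_F^2$. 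Since $\bL \in \mathbb{R}^{p\times K}$ is a \emph{feasible} point of this minimization over $\bfGamma \in \mathbb{R}^{p\times K}$, the optimality of $\widehat{\bfGamma}$ immediately gives
\begin{align*}
    \Vert \bE - \widehat{\bfGamma}\bF \Vert_F^2 \;=\; \min_{\bfGamma \in \mathbb{R}^{p\times K}} \Vert \bE - \bfGamma\bF \Vert_F^2 \;\leq\; \Vert \bE - \bL\bF \Vert_F^2,
\end{align*}
which is the claimed inequality once one adds back $\widehat{\bB}\bX$ on both sides inside the norm.

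To make this fully rigorous I would spell out the two reductions explicitly rather than just citing the earlier lemmas. For the first reduction, I would note that for any $\bfGamma$, $\Vert \bY - \widehat{\bB}\bX - \bfGamma\bF\Vert_F^2 = \Vert \bE - \bfGamma\bF\Vert_F^2$ by definition of $\bE$, so that minimizing the left side over $\bfGamma$ is literally the same problem as minimizing the right side; I would also need the (already established) fact that the joint minimizer's $\bB$-component equals $\widehat{\bB}$, so that the left-hand side of the lemma's inequality — which is the joint-minimization residual — equals $\min_{\bfGamma}\Vert \bE - \bfGamma\bF\Vert_F^2$. For the right-hand side of the lemma, the same substitution $\Vert \bY - \widehat{\bB}\bX - \bL\bF\Vert_F^2 = \Vert \bE - \bL\bF\Vert_F^2$ applies verbatim. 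Combining these two identities with the one-line minimization inequality above closes the argument.

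The only point requiring a moment of care — and the nearest thing to an obstacle — is whether the left-hand side of the lemma is being interpreted as the residual from the \emph{joint} least-squares fit of $(\bB,\bfGamma)$ or from some other fit; under the former reading (consistent with the surrounding text and Lemmas \ref{lem:notaffect}--\ref{lem:joint}), the argument is essentially immediate, as the whole content is ``a minimum over a set is no larger than the value at any particular element.'' If instead one wanted the inequality to hold with $\bB$ also free on the right-hand side, it would follow a fortiori, since enlarging the feasible set (from $\{(\widehat{\bB},\bL)\}$ to all $(\bB,\bfGamma)$) can only lower the optimal value; I would add a sentence to that effect for completeness. Optionally, I would also remark on the translation to traces: since $\tr(\bfSigma_{\boldsymbol{\mathcal{E}}})$ and $\tr(\bfSigma_{\bU})$ are, up to the same scaling, the expected squared Frobenius norms of the respective residuals, the population-level inequality $\tr(\bfSigma_{\boldsymbol{\mathcal{E}}}) \leq \tr(\bfSigma_{\bU})$ follows by taking expectations, which is the consequence the main text actually invokes.
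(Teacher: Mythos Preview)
Your argument is correct and is in fact more direct than the paper's. The paper also first reduces (via Lemmas \ref{lem:notaffect} and \ref{lem:joint}) to comparing $\Vert \bE-\widehat{\bfGamma}\bF\Vert_F^2$ with $\Vert \bE-\bL\bF\Vert_F^2$, but then, rather than invoking the one-line optimality principle you use, it proceeds by an explicit algebraic decomposition: writing $\bE-\widehat{\bfGamma}\bF = (\bE-\bL\bF)+(\bL\bF-\widehat{\bfGamma}\bF)$, expanding the squared norm, splitting the cross term as $(\bE-\bE\bPi_{\bF})+(\bE\bPi_{\bF}-\bL\bF)$, and using the projection identity $(\bE-\bE\bPi_{\bF})\bF^\top=0$ to kill one piece. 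This yields the exact identity $\Vert \bE-\bL\bF\Vert_F^2-\Vert \bE-\widehat{\bfGamma}\bF\Vert_F^2=\Vert \bL\bF-\widehat{\bfGamma}\bF\Vert_F^2\ge 0$, together with the equality case $\bL\bF=\bE\bPi_{\bF}$. Your approach buys brevity and conceptual clarity (it is literally ``the minimum is no larger than any feasible value''); the paper's approach buys a quantitative gap and an explicit characterization of when equality holds, which your argument does not provide without further work.
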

\begin{proof}
    Due to \Cref{lem:notaffect}, including $\bF$ doesn't affect the estimation of $\bB$, and the equivalence of the joint and two-stage estimation (\Cref{lem:joint}), it suffices to compare $\Vert \bE-\bfGamma\bF\Vert_F^2$ and $\Vert \bE-\bL\bF\Vert_F^2$.
    In our CoReg model, $\bfGamma$ is obtained by OLS, i.e., $\bfGamma = \bE \bF^\top (\bF \bF^\top)^{-1}$. Thus, we can write $\Vert \bE-\bfGamma\bF\Vert_F^2 = \Vert \bE (\bI-\bPi_{\bF})\Vert_F^2$. 
$\Vert \bE - \bfGamma\bF \Vert_F^2 = \Vert \bE-\bL\bF + \bL\bF - \bfGamma\bF\Vert_F^2$
\begin{align*}
    \Vert \bE-\bL\bF\Vert_F^2-\Vert \bE - \bfGamma\bF \Vert_F^2 = &-2\tr\left((\bE-\bL\bF)^\top (\bL\bF-\bfGamma\bF) \right) -\Vert \bL \bF-\bfGamma\bF\Vert_F^2
\end{align*}
We now express the first term in $\tr$ by $(\bE-\bL\bF)^\top = ((\bE-\bE\bPi_{\bF})+(\bE\bPi_{\bF}-\bL\bF))^\top$, then by the linearity of the $\tr$, we can decompose it into two parts: (i) $\tr((\bE-\bE\bPi_{\bF})^\top (\bL\bF-\bE\bPi_{\bF}))$, and (ii) $\tr((\bE\bPi_{\bF}-\bL\bF)^\top(\bL\bF-\bE\bPi_{\bF}))$. By using the property of the projection, i.e., $(\bE-\bE\bPi_{\bF})\bF^\top = 0$, we can show the first term is $0$.
\begin{align*}
    &\tr\left((\bE-\bE\bPi_{\bF})^\top(\bL\bF-\bE\bPi_{\bF})\right) \\
    &= \tr\left((\bE-\bE\bPi_{\bF})^\top(\bL-\bE\bF^\top(\bF\bF^\top)^{-1})\bF\right)\\
    &=\tr\left((\bL-\bE\bF^\top(\bF\bF^\top)^{-1})\bF (\bE-\bE\bPi_{\bF})^\top\right) = 0
\end{align*}
Since the second term is $-\Vert \bL\bF -\bE\bPi_{\bF}\Vert_F^2$, we can simplify the above equality,
\begin{align*}
    \Vert \bE-\bL\bF\Vert_F^2-\Vert \bE - \bfGamma\bF \Vert_F^2 = \Vert \bL\bF-\bfGamma\bF\Vert_F^2 \geq 0.
\end{align*}
Thus we showed that $\Vert \bE - \bfGamma\bF \Vert_F^2 \leq \Vert \bE-\bL\bF\Vert_F^2$, where the equality holds when $\bL\bF = \bE\bPi_{\bF}$.
\end{proof}

\newpage
\section{CoReg Algorithm}

\begin{algorithm}[ht!]
\caption{Algorithm for CoReg}\label{algo:CoReg}
\begin{algorithmic}[1]
    \Step\State Fit GLM without addressing the dependence
        \begin{align*}
                \bY = \bB\bX + \bE
        \end{align*}
    \NoNumber{Estimate $\widehat{\bB}$, and $\widehat{\bfSigma} = \Cov({\bE}) = \bS \rightarrow \Cor({\bE}) = \bR$}\vspace{.5cm}
    \Step\State Coexpression Network analysis \vspace{.2cm}
    \Input $\bR$ 
      \NoNumber{\Suppressnumber\ForEach{$\boldsymbol{\lambda}=\{\lambda_1,\cdots,\lambda_l\}$} extract dense subgraphs 
        \begin{align*}
     \bigcup_{k=1}^K G_{k\vert \lambda} = \argmax_{\{\bigcup G_k, K\}} \frac{\vert \bigcup_{k=1}^K  W (G_k) \vert}{\vert \bigcup_{k=1}^K V_k\vert^\lambda}, \label{eq:Dense}
        \end{align*}
        \State{i). Estimate $\widehat{\bL}_{\lambda}$} by $\bigcup G_{k\vert \lambda}$
        \State{ii). $\widehat{\bF}_{\lambda}= \left(\widehat{\bL}_{\lambda}^\top\widehat{\bL}_{\lambda}\right)^{-1}\widehat{\bL}_{\lambda}^\top \bE$}
        \State{iii). $\widehat{\bfSigma}_{\bF_{\lambda}} = \Cov(\widehat{\bF}_{\lambda i})$}
   \EndFor\Reactivatenumber{4}}\vspace{.2cm}
   \Output{$\bF = \widehat{\bF}_{\lambda^{\ast}}$}
        \begin{align*}\text{, where }\ 
            \lambda^\ast = \argmin_{\lambda\in\boldsymbol{\lambda}}\Vert \widehat{\bfSigma} - \widehat{\bL}_{\lambda}\widehat{\bfSigma}_{\bF_{\lambda}}\widehat{\bL}_{\lambda}^\top \Vert_2^2 + \Vert \widehat{\bL}_{\lambda} \Vert_\ast
        \end{align*}
\end{algorithmic}
\begin{algorithmic}[1]
\Reactivatenumber{3}
    \NewStep\State {Fit CoReg model with $\bF$ as predictors}
        \begin{align*}
            \bY = \bB\bX + \bfGamma\bF + \boldsymbol{\mathcal{E}}
        \end{align*}
\end{algorithmic}
\end{algorithm}
\FloatBarrier

\newpage
\section{Comparison of Community Detection Methods}

We compared CoReg with other commonly used community detection methods, including Weighted Gene Co-expression Network Analysis (WGCNA) and the Louvain algorithm in \Cref{fig:Comparison_community}.  The resulting organized correlation heatmaps demonstrate that CoReg successfully reveals the underlying interconnected community structure, whereas the other methods fail to capture these patterns.

\begin{figure}
    \centering
        \includegraphics[width=\textwidth]{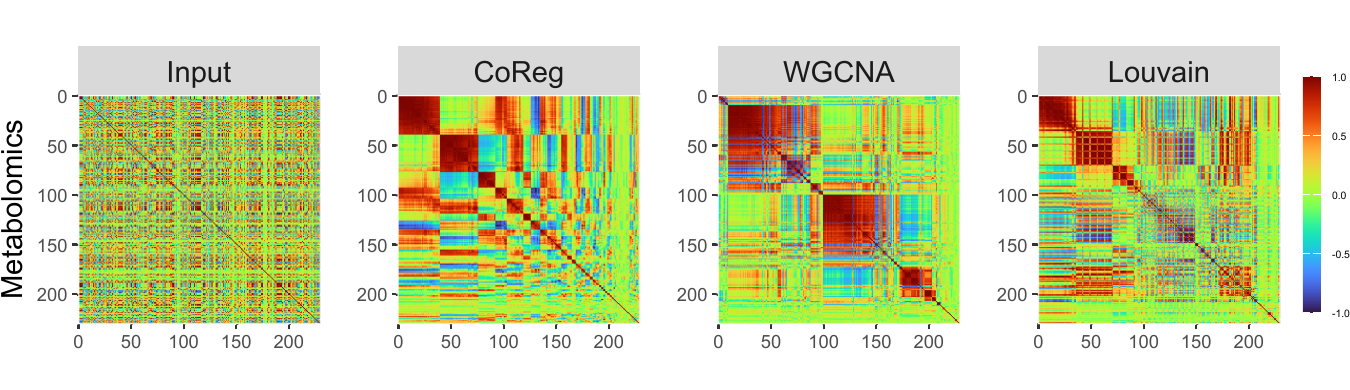}
        \includegraphics[width=\textwidth]{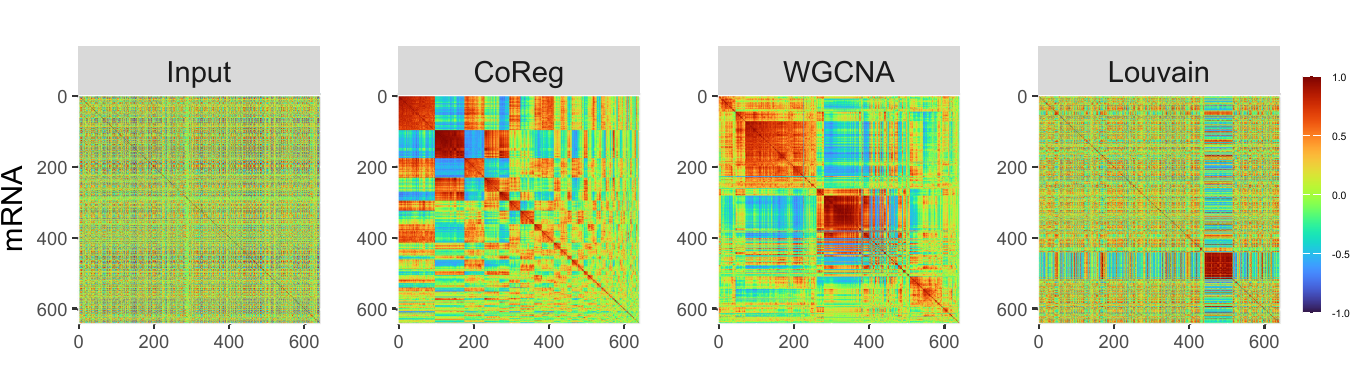}
        \caption{Comparison of community detection methods between CoReg and other commonly used approaches, including Weighted gene co-expression network analysis (WGCNA) and the Louvain algorithm. Top: ADNI metabolomics data ((C) of \tblue{Figure 1} in the main manuscript), Bottom: Leukocyte Messenger RNA (mRNA) expression data ((B) of \tblue{Figure 1} in the main manuscript). The organized correlation heatmaps demonstrate that CoReg successfully reveals the underlying interconnected community structure, whereas the other methods fail to capture these patterns.}
        \label{fig:Comparison_community}
\end{figure}
\FloatBarrier


\section{Additional simulation results}

This section includes additional simulation results.

\subsection{Computational Efficiency Analysis}

The computational complexity of CoReg is determined by co-expression network analysis, CFA, and multivariate regression analysis. As we employ a greedy peeling algorithm for module extraction, the computational complexity of this step is bounded by $\mathcal{O}(p^2)$ under the assumption of all edges being densely connected. Under a commonly used random graph assumption $G(p,\pi)$ with connection probability $\pi=c \frac{\log p}{p}$, the expected computational complexity reduces to $\mathcal{O}(p \log p)$. The computation cost for calculating the factor score has a complexity of  $\mathcal{O}(nK^2+K^3)$,  where $K$ is the size of the extracted subgraph, and $K \ll p$. Combining these results, the overall complexity is $\mathcal{O}(p \log p+nK^2+K^3)$, which is computationally efficient. In contrast, the computational complexity of SVA, which is $\mathcal{O}(pn^2+n^3)$, stems from the singular value decomposition of the $p\times n$ matrix.

We further demonstrate the computational cost of CoReg using simulation studies with varying sample sizes (i.e., $n$) and dimensions of the response variable (i.e., $p$). For varying $n=\{100,1000\}$ we fixed $p=200$, while for varying $q=\{100,2000\}$, we fixed $n=200$. Computation times were measured in seconds on a system equipped with an Intel Core i7-8850H CPU at 2.60 GHz and 16 GB of RAM. The mean computation times averaged over 20 replications for each setting are displayed in \Cref{fig:Computation_Time}. As both $n$ and $p$ increase, the proposed CoReg demonstrates superior computational efficiency, exhibiting only marginal increases in time consumption, while SVA requires increased computational time.

\begin{figure}
    \centering
    \includegraphics[width=\textwidth]{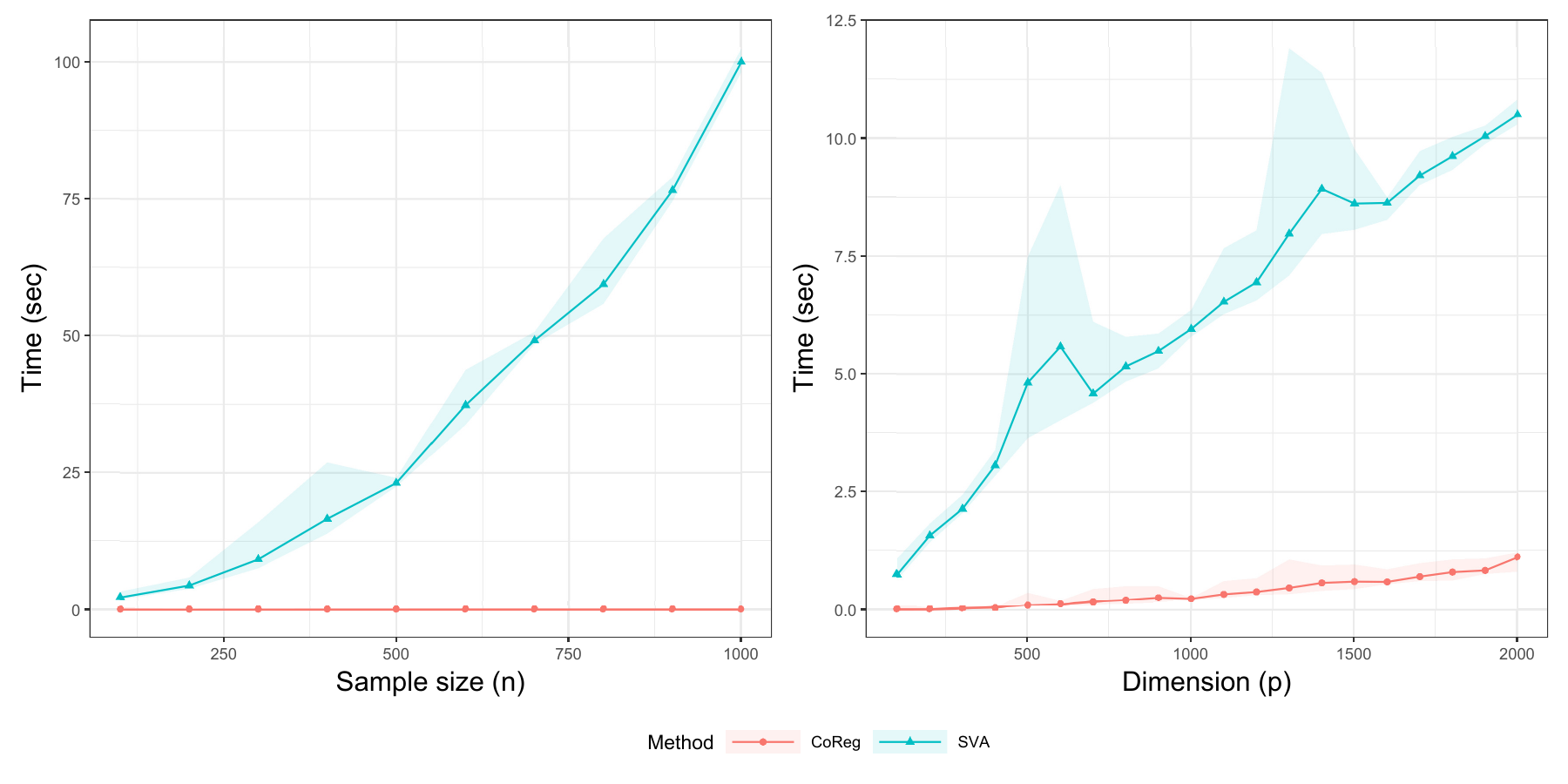}
    \caption{Computation time, measured in seconds and averaged over 20 replications. Upper panel: The sample size ($n$) was varied while maintaining a fixed dimension ($p=200$). Bottom panel: The dimension of outcome variables ($p$) was varied while the sample size remained fixed ($n=100$).}
    \label{fig:Computation_Time}
\end{figure}
\FloatBarrier


\subsection{Supplementary figures for simulations}

Heatmaps of correlations from simulated data, ROC curves, $p$-value histograms, and the Venn diagram from the replicability analysis (Scenario 2) are displayed in \Cref{fig:Sim_Heatmap}, \Cref{fig:p_val_histo},  and \Cref{fig:Replicability_scenario2}, respectively.

\begin{figure}
    \centering
    \includegraphics[width=\textwidth]{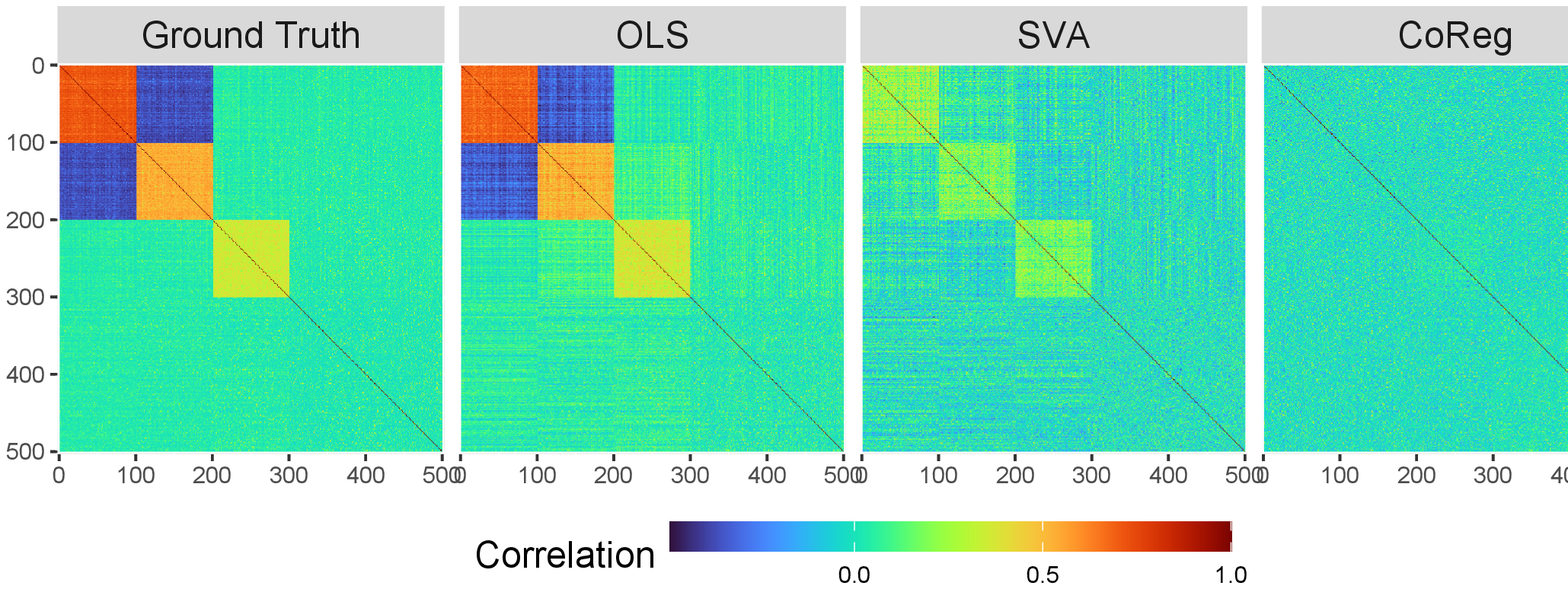}
    \caption{Heatmap of Correlations: The Ground Truth heatmap displays the correlation structure among the response variables $\bY$, featuring three densely correlated diagonal blocks:  the first block in the left corner shows strong correlations in red, the second block shows moderate correlations in orange, and the third displays weaker correlations in yellow. Negative inter-block correlations are represented in blue on the off-diagonals. In the residual heatmap from the OLS model, a structured correlation pattern remains, suggesting that OLS fails to address dependence. Although SVA addresses the correlation among variables, it does not fully eliminate dependencies, as evidenced by the yellow-colored diagonal blocks. In contrast, the heatmap of CoReg appears almost uniformly green, indicating that it effectively addresses the structured correlation pattern.}
    \label{fig:Sim_Heatmap}
\end{figure}
\FloatBarrier


\begin{figure}
    \centering
    \includegraphics[width=\textwidth]{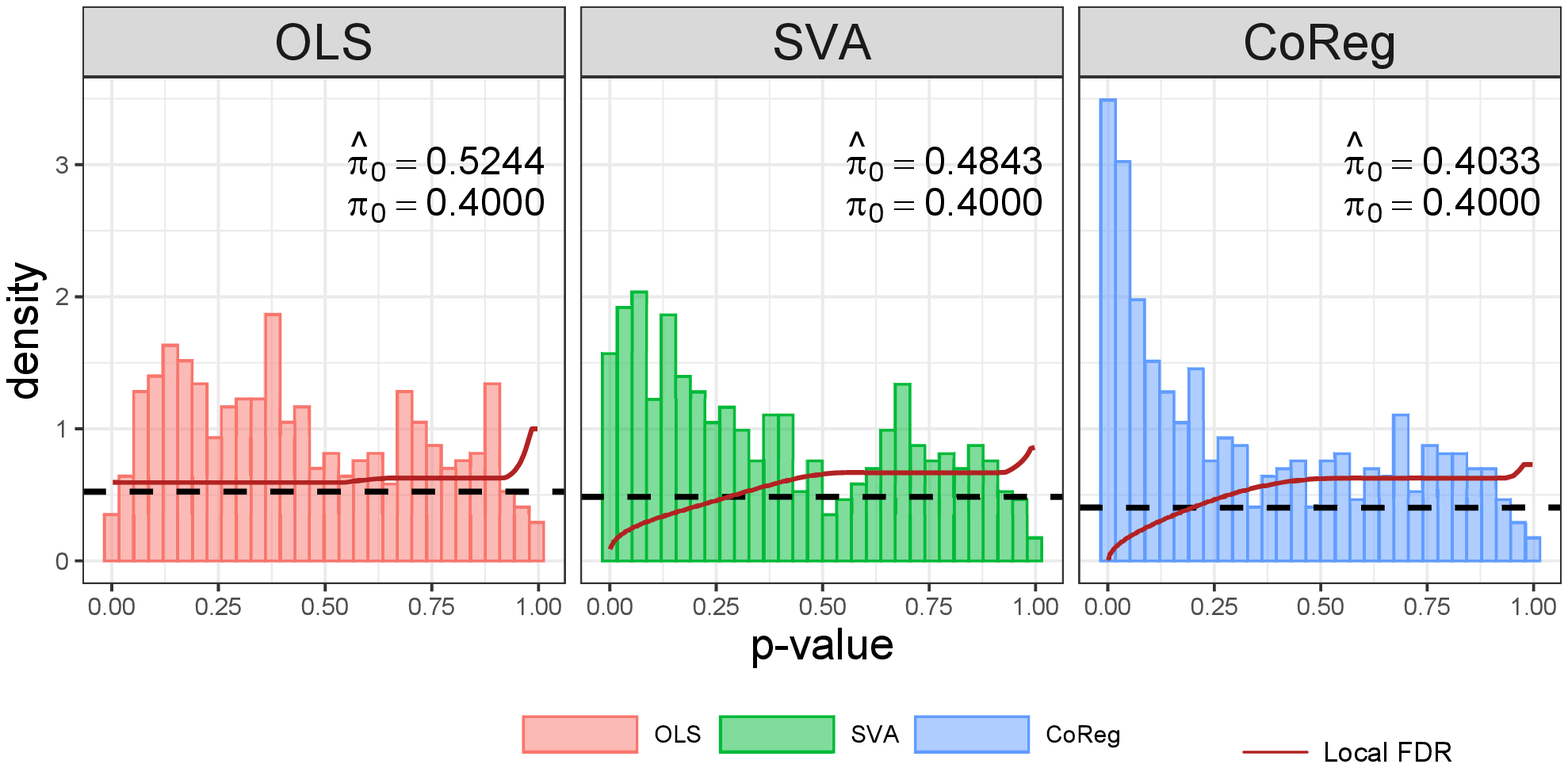}
    \caption{$p$-value histogram from simulation study. }
    \label{fig:p_val_histo}
\end{figure}
\FloatBarrier

\begin{figure}
    \centering
    \includegraphics[width=\textwidth]{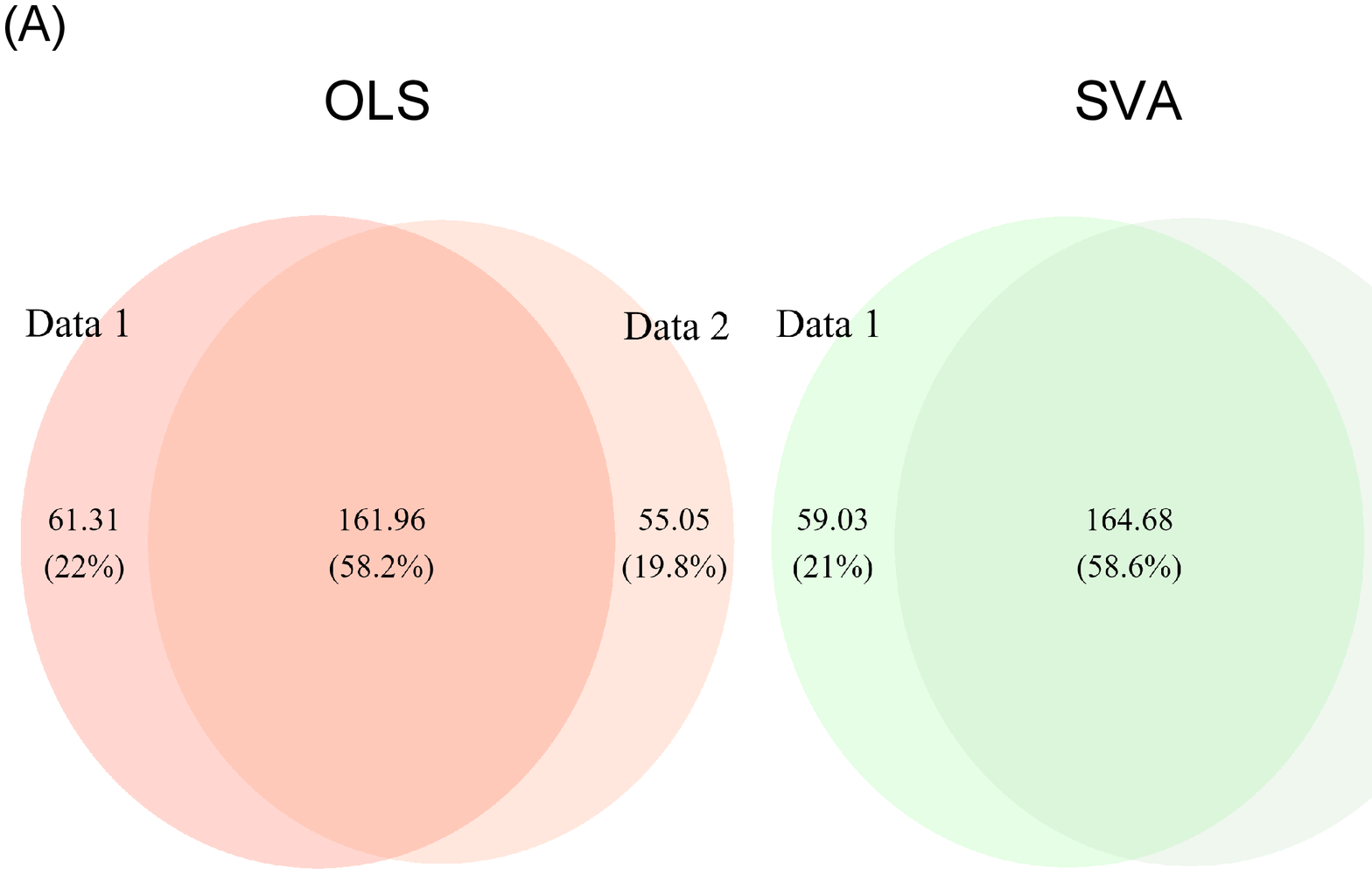}
    \caption{Result from the replicability analysis (scenario 2): Two datasets were generated under different experimental conditions, with each dataset having the same noise level but different effect sizes. We simulated 100 replications. The values represent the average true positive count for each section in the Venn diagram, with proportions provided in parentheses. CoReg demonstrated a larger intersection compared to other competing methods, indicating superior replicability.}
    \label{fig:Replicability_scenario2}
\end{figure}
\FloatBarrier

\section{Additional omics data analysis results}

This section includes additional omics data analysis results. \Cref{Tab:Result_ADNI_UKB} presents the statistical inference results for the omics data. \Cref{fig:Real_Heatmap} displays the correlation heatmap. \Cref{fig:ADNI_Pvalues} shows the $p$-value histograms for the ADNI data obtained using various methods. \Cref{fig:UKBB_subsample} and \Cref{fig:Real_UKBB_speed} show the results of subsample analyses conducted on the UKBB data.

In \Cref{fig:ADNI_Pvalues}, the raw $p$-value density histogram of CoReg is shown in comparison to those of competing methods. The CoReg $p$-value distribution exhibits the different shape showing a improved sensitivity, with a pronounced peak near 0 and a relatively uniform tail for larger $p$-values \citep{StoreyTibshirani:2003}, which facilitates easier separation of significant findings. In contrast, $p$-values from the other methods are uniformly distributed across the entire range, lacking the distinct concentration near 0 that indicates no positive findings after multiple testing corrections. We further estimated the proportion of true null hypotheses ($\widehat{\pi}_0$) for all tested hypotheses following \cite{Storey:2002qvalue} and computed local FDR values \citep{Efron:2001localfdr}. Consistently lower local FDR at low $p$-values and low $\widehat{\pi}_0$ from CoReg, compared to those from other methods, suggests higher sensitivity and more potential true discoveries, whereas other methods appear more conservative, potentially missing true signals.  The improved sensitivity is well aligned with the findings in the simulation study.

\begin{table}
\centering
    \caption{Results from the ADNI DNA methylation data (top). The numbers indicate the count of CpG sites with a statistically significant effect of age and AD Diagnosis detected using each method. The values in parentheses indicate the ratio of significant CpG sites to the total number of CpG sites analyzed ($q = 4200$). In UKBB metabolomics data (bottom), the analysis assessed the associations of metabolic profiles with age and FR-PDFF.}
 \label{Tab:Result_ADNI_UKB}
    \begin{tabular}{lrrrrr}
    \toprule
    \multicolumn{4}{c}{ADNI DNA Methylation data ($n=364, p=4200$)}\\
    \midrule
        &  OLS & SVA & CoReg  \\
    Time (sec.) & $\cdot$& 108.34  & 13.62 \\
    Age & 32 (0.76\%) & 10 (0.24\%) & 543 (12.93\%)\\     
    AD Diagnosis &0 (0.00\%) & 0 (0.00\%) & 54 (1.29\%)\\         
    \midrule
    \multicolumn{4}{c}{UKBB Metabolomics data ($n=7329, p=249$)}\\
        \midrule
        &  OLS & SVA & CoReg  \\
    Time (sec.) & $\cdot$& 3724.31  & 0.43 \\
    Age & 229 (91.97\%) & 229 (91.97\%) & 245 (98.39\%)\\
    FR-PDFF & 199 (79.92\%) & 199 (79.92\%) & 240 (96.39\%)\\
    \bottomrule
    \end{tabular}
\end{table}
\FloatBarrier

\begin{figure}
    \centering
    \includegraphics[width=\textwidth]{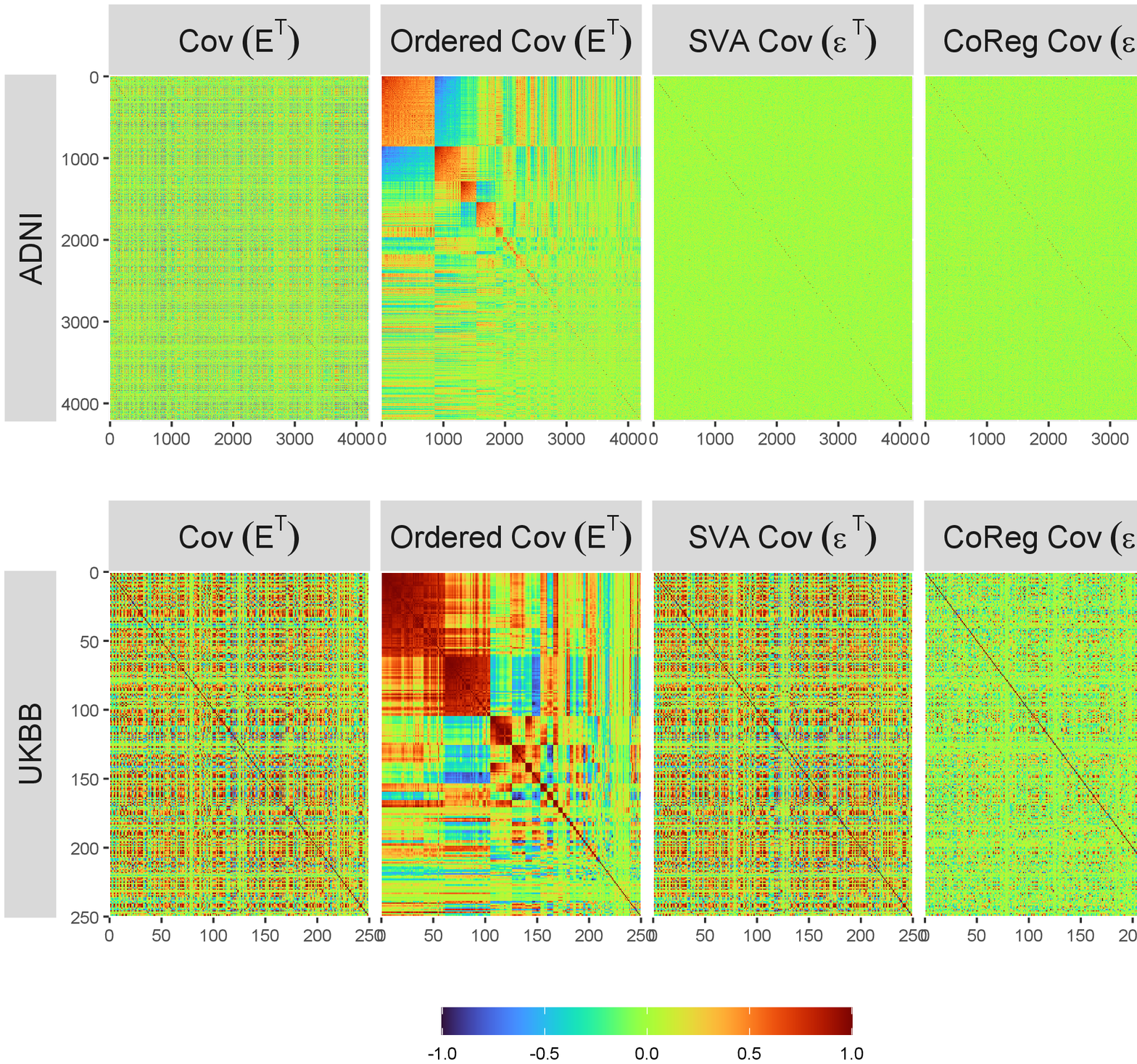}
    \caption{Heatmaps of residual correlations from two datasets (Top: ADNI, Bottom: UKBB) are presented. Methods that more effectively model the dependence show less dependence, as indicated by a more uniform (green) heatmap. In contrast, methods that fail to capture the dependence display highly scattered positive (red) and negative (blue) elements in the heatmap.}
    \label{fig:Real_Heatmap}
\end{figure}
\FloatBarrier

\begin{figure}
    \centering
    \includegraphics[width=\textwidth]{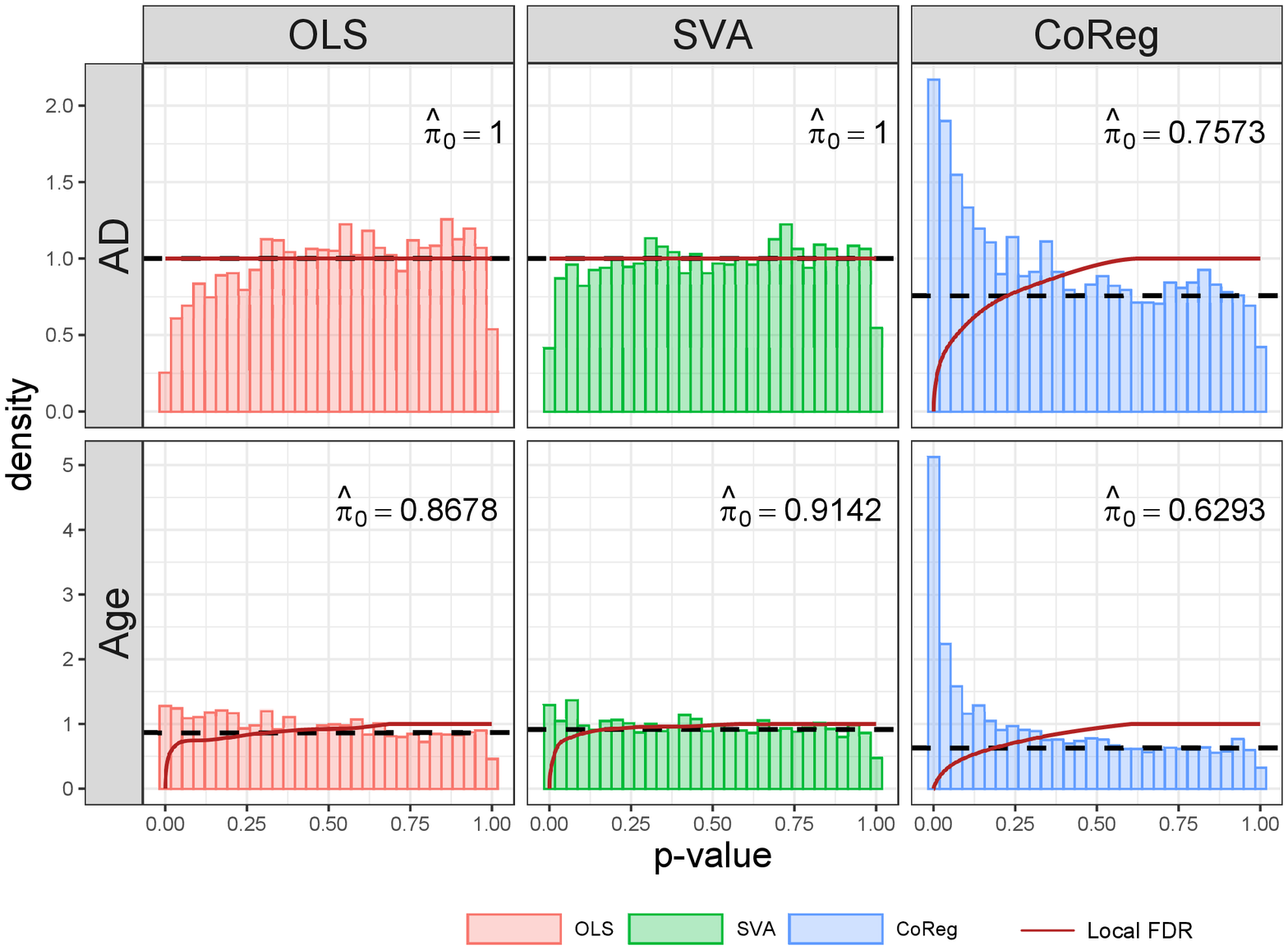}
    \caption{Histogram of raw $p$-values from the ADNI DNA methylation dataset, illustrating the distribution of $p$-values for the age effect (top) and Alzheimer's disease effect (bottom) on DNA methylation levels. The estimated proportion of true null hypotheses ($\widehat{\pi}_0$) for all tested hypotheses based on \cite{Storey:2002qvalue} is provided, with the corresponding black dashed line indicating the value of $\widehat{\pi}_0$. The red solid curve represents the local FDR \citep{Efron:2001localfdr} for each test based on $p$-values.}
    \label{fig:ADNI_Pvalues}
\end{figure}
\FloatBarrier

\begin{figure}
    \centering
    \includegraphics[width=\textwidth]{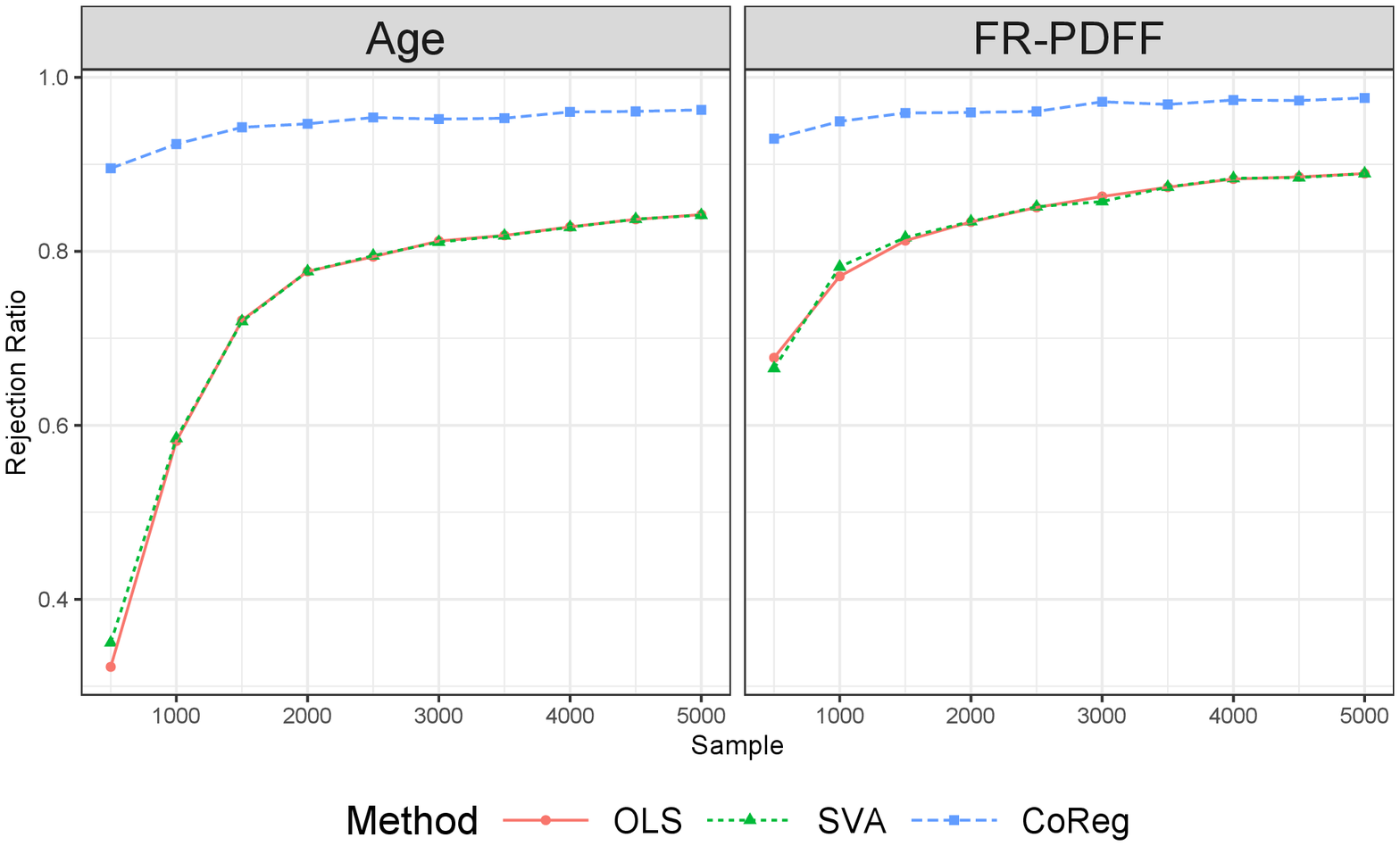}
    \caption{Performance analysis using a subset of the data. The rejection ratio, averaged over 20 replications, represents the proportion of tests rejected (indicating a significant effect) among 249 metabolite measurements. The $x$-axis denotes varying sample sizes, illustrating how the rejection ratio changes with sample size.}
    \label{fig:UKBB_subsample}
\end{figure}
\FloatBarrier

\begin{figure}
    \centering
    \includegraphics[width=\textwidth]{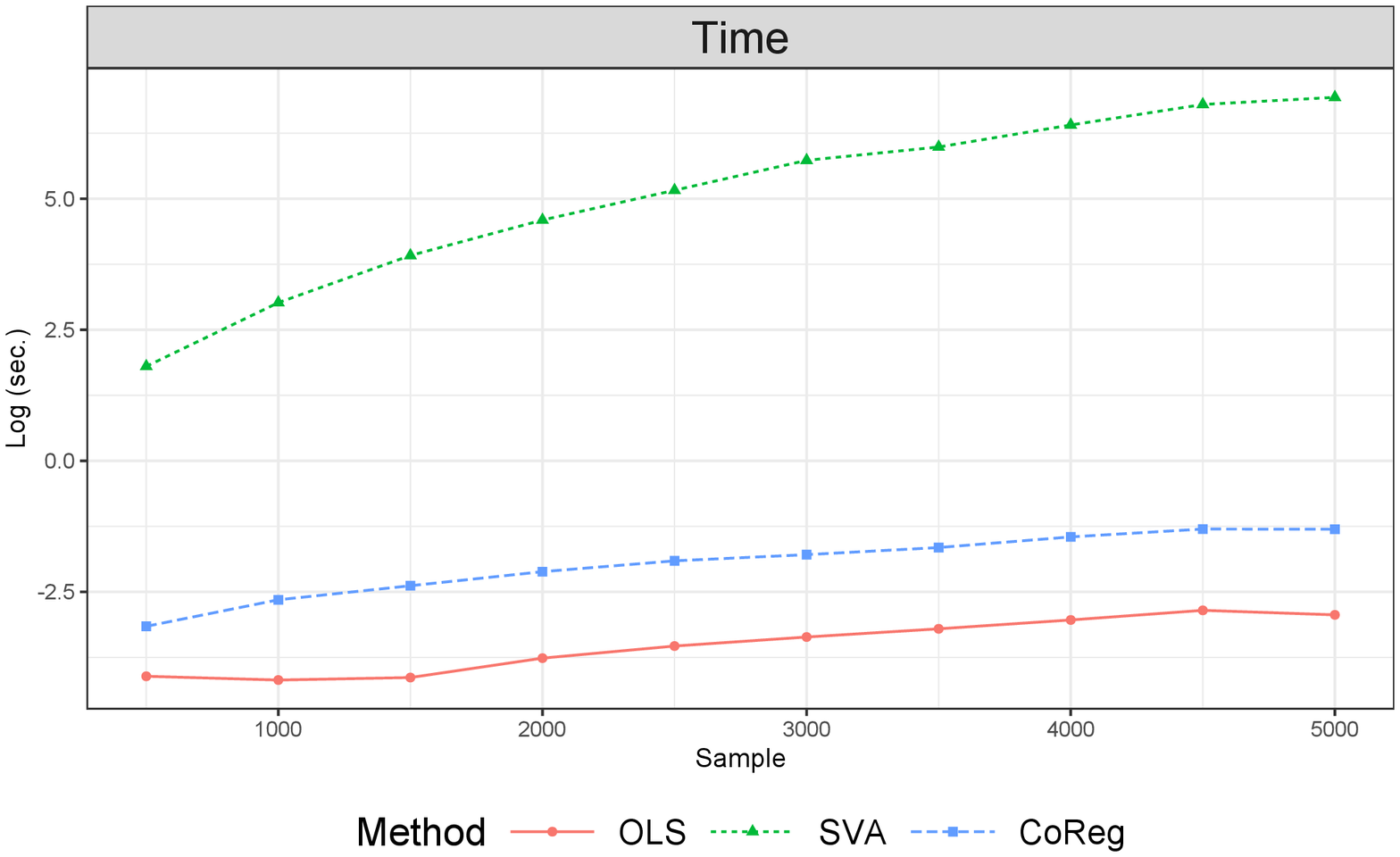}
    \caption{Computational time comparison from different methods. }
    \label{fig:Real_UKBB_speed}
\end{figure}
\FloatBarrier








\end{document}